%
\documentclass[10pt]{article}

\usepackage{amsmath, amsthm}
\usepackage{amssymb}

\usepackage{hyperref}

\usepackage{graphicx}

\usepackage{cite}

\usepackage[usenames,dvipsnames]{xcolor} 
\usepackage{multirow}
\usepackage{array}
\usepackage{tikz}
\usepackage{mathtools}
 \usepackage{float}

\usepackage{ctable}
\usepackage{multirow} 
\usepackage{mathtools}
 
 \newtheorem{lemma}{Lemma}
\newtheorem{theorem}{Theorem}
 

%
%
%
\topmargin 0.0cm \oddsidemargin 0.5cm \evensidemargin 0.5cm \textwidth 16cm
\textheight 21cm

\usepackage[labelfont=bf,labelsep=period,justification=raggedright]{caption}

\bibliographystyle{plos2009}

\makeatletter \renewcommand{\@biblabel}[1]{\quad#1.} \makeatother

\newcommand{\lra}[1]{\langle #1 \rangle}

\newcommand{\turnover}{\nu}

\newcommand{\fosp}{s}
\newcommand{\sampleprop}{sampling proportion}

\newcommand{\leafp}{s_l}
\newcommand{\leafsampleprop}{leaf sampling proportion}

\newcommand{\intNum}{l}

\date{}

\pagestyle{myheadings}
\begin{document}

\begin{flushleft} {\Large \textbf{Bayesian inference of sampled ancestor trees
for epidemiology and fossil calibration}
}
\\ Alexandra Gavryushkina$^{1,2, \ast}$, David Welch$^{1,2}$, Tanja Stadler$^3$,
Alexei Drummond$^{1,2, \ast}$ \\ \bf{1} Department of Computer Science,
University of Auckland, Auckland, New Zealand \\ \bf{2} Allan Wilson Centre for
Molecular Ecology and Evolution, New Zealand \\ \bf{3} ETH Z\"{u}rich,
Switzerland \\ $\ast$ E-mail: Corresponding sasha.gavryushkina@auckland.ac.nz,
alexei@cs.auckland.ac.nz \end{flushleft}

\section*{Abstract}

Phylogenetic analyses which include fossils or molecular sequences that are
sampled through time require models that allow one sample to be a direct
ancestor of another sample. As previously available  phylogenetic inference
tools assume that all samples are tips, they do not allow for this possibility.

We have developed and implemented a Bayesian Markov Chain Monte Carlo (MCMC)
algorithm to infer what we call sampled ancestor trees, that is, trees in which
sampled individuals can be direct ancestors of other sampled individuals.

We use a family of birth-death models where individuals may remain in the tree
process after the sampling, in particular we extend the birth-death skyline
model [Stadler {\it et al}, 2013] to sampled ancestor trees. This method allows
the detection of sampled ancestors as well as estimation of the probability that
an individual will be removed from the process when it is sampled. We show that
sampled ancestor birth-death models where all samples come from different time points are non-identifiable and thus
require one parameter to be known in order to infer other parameters.

We apply this method to epidemiological data, where the possibility of  sampled
ancestors enables us to  identify individuals that infected other individuals after being sampled
and to infer fundamental epidemiological parameters.

We also apply the method to infer divergence times and diversification rates
when fossils are included among the species samples, so that fossilisation events are
modelled as a part of the tree branching process. Such modelling has many
advantages as argued in literature.

The sampler is available as an open-source BEAST2 package \\
(\url{https://github.com/gavryushkina/sampled-ancestors}).

\section*{Author Summary}

A central goal of phylogenetic analysis is to estimate evolutionary relationships and
population parameters such as speciation and extinction rates or the 
rate of infectious disease spread from molecular data. 
The statistical methods used in these analyses require that  the underlying tree branching process be
specified. Standard models for the branching process which were
originally designed to describe the evolutionary past of present day species do not
allow for direct ancestors within the sampled taxa, that is, they do not allow one sampled taxon to be the 
ancestor of another. However the probability of
sampling a direct ancestor is not negligible for many types
of data. For example, when fossil and living species are analysed together to
infer species divergence times, fossil species may or may not be direct
ancestors of living species. In epidemiology, a sampled individual (a host from
which a pathogen sequence was obtained) can infect other individuals after sampling, which then go on to be sampled themselves. 
Recently, models that allow for direct ancestors have been
introduced. Such models produce phylogenetic trees with a different structure from the classic phylogenetic trees
and so using these models in inference requires new computational methods.
Here we developed and implemented a Bayesian Markov chain Monte Carlo 
framework for phylogenetic analysis allowing for the possibility of direct ancestors.

\section*{Introduction}

Phylogenetic analysis uses  molecular sequence data   to infer
evolutionary relationships between organisms and to infer evolutionary
parameters. Since the introduction of Bayesian inference in phylogenetics
\cite{Yang1997,Mau1999,MrBayes}, it has become the standard approach for fully
probabilistic inference of evolutionary history with many popular
implementations~\cite{Beast17,MrBayes32,Beast2,PhyloBayes3} of Markov chain
Monte Carlo (MCMC)~\cite{Metropolis1953,Hastings1970} sampling over the space of
phylogenetic trees. Initial descriptions of Bayesian phylogenetic analysis were
restricted to considering bifurcating trees \cite{Yang1997,Mau1999}, but have been extended to 
include explicit polytomies~\cite{Lewis2005}. Here we tackle phylogenetic
inference with trees that may contain sampled ancestors \cite{Gavr2013}.

Standard phylogenetic models developed for inferring the evolutionary past of
present day species assume that all samples are terminal (leaf) nodes
in the estimated phylogenetic tree. However, serially sampled data
generated by different evolutionary processes can be analysed using phylogenetic
methods \cite{Drummond2003} and, in some cases, the assumption that all sampled taxa are leaf
nodes is not appropriate.

One case in point is when inferring epidemiological
parameters from viral sequence data obtained from infected
hosts~\cite{Pybus2001,Gren2004,Stadler2011R0,Ypma13,Kuhnert2014}. Viral sequences are obtained from distinct hosts and treated as samples
from the transmission process. Using standard models to describe the infectious
disease transmission process entails the assumption that a host becomes
uninfectious at sampling (where sampling is obtaining a sequence or
sequences from the pathogen population residing in a single infected host).
However in many cases, hosts remain infectious after sampling and, when sampling is sufficiently 
dense, the probability of sampling an individual that later infects 
another individual which is also sampled  is not negligible
\cite{volz2013inferring, Teunis13, Vrancken2014}. 

A recent analysis of a well-characterised HIV transmission chain
\cite{Vrancken2014} employed a hierarchical model of a gene tree inside a
transmission tree to infer the differences in evolutionary rates (substitution
rates) within and among hosts. Hierarchical modelling of gene trees inside
transmission trees has also been used to infer transmission events for small
epidemic outbreaks where epidemiological data is available in the form of known
infection and recovery times for each host~\cite{Ypma13}. In both cases the
inference of transmission trees assumes complete sampling of the hosts involved,
and the host sampling process is not explicitly modelled.

Incomplete sampling is explicitly modelled by birth-death-sampling
models~\cite{Stadler2010JTB,Stadler2011R0, Didier12, StadKuhn12}, for which the
probability density functions of the trees are available in closed form, thus making these models
tractable for use in Bayesian inference. The birth-death-sampling models do not
assume that individuals are removed from the tree process after the sampling. 
However, applications of models allowing for infection after sampling has not been possible due to a lack of software, meaning that many applications simply ignore sampled
ancestors~\cite{Stadler2011R0, StadKuhn12}.

Another problem that may require sampled ancestor models
 is inferring  species divergence times using fossil data. 
Without the means to calibrate the times of divergences, the length of branches in
the estimated molecular phylogeny of contemporaneous sequences are typically
described in units of expected substitutions per site. Geologically
dated fossil data can be employed to calibrate a phylogenetic tree, thus
providing absolute branch lengths. The most common approach here is to specify
age limits or a probability density function on specific divergence times in the
phylogeny, where the constraints are defined using the fossil data
\cite{Sanderson1997,Thorne1998,Drummond2006,Rannala2007,Ho2009}. There
are several drawbacks connected to this approach~\cite{Ronq2012,Heath2013}.
First, there is  potential for  inconsistency when applying two priors on
the phylogeny \cite{Heled2012}: a calibration prior on one or more divergence
times and a tree process prior on the entire tree. Second, it is not obvious how
to specify a calibration density so that it reflects  prior knowledge about
 divergence times \cite{Ronq2012,Heath2013}. Finally, such densities usually
only use the oldest fossil within a particular clade, thus discarding much of
the information available in the fossil record \cite{Heath2013}.

An approach that addresses  these issues requires modelling fossilisation
events as a part of the tree process prior. This allows for the joint analysis
of fossil and recent taxa together in a unified framework~\cite{Pyron2011,
Wood2012, Ronq2012,Heath2013,Silv2014}.  Models that jointly describe the
processes of macroevolution and fossilisation should account for possible
ancestor-descendant relationships between fossil and living
species~\cite{Foote1996}, and thus include sampled ancestors.

A birth-death model with sampled ancestors have been used to estimate 
speciation and extinction rates from phylogenies in~\cite{Didier12}. 
Heath {et al.}~\cite{Heath2013} have used the birth-death model with sampled
ancestors (they call this the {\it fossilized birth-death process}) to
explicitly model fossilisation events and estimate divergence times in a
Bayesian framework.  In their approach, the tree topology relating the
extant species has to be known for the inference \cite{Heath2013}. So a method
that simultaneously estimates the divergence times and tree topology while
modelling incorporation of sampled fossil taxa is an obvious next step.

Full Bayesian MCMC inference using models with sampled ancestors is
complicated by the fact that such models produce trees, which we call  {\it sampled ancestor trees}~\cite{Gavr2013},  
 that are not strictly binary. They may have sampled nodes that 
lie on branches, forming an internal node with one direct ancestor and one
direct descendent.   Thus, modelling sampled
ancestors induces a tree space where the tree has a variable number of
dimensions (a function of the number of sampled ancestors), which necessitates extensions to the standard MCMC tree algorithms.  

Here we
describe a reversible-jump MCMC proposal kernel~\cite{Green1995} to effectively traverse the
space of sampled ancestor trees and implement it within the BEAST2 software
platform~\cite{Beast2}. We  study the limitations of  birth-death models
with sampled ancestors and extend the birth-death skyline model~\cite{StadKuhn12} to sampled
ancestor trees. We apply the new posterior sampler to two types of data: a
serially sampled viral data set (from HIV), and molecular phylogeny of bear
sequences with fossil samples.

\section*{Methods}

\subsection*{Tree models with sampled ancestors}

In this section, we consider birth-death sampling models \cite{Stadler2010JTB,
Stadler2011R0, Didier12, StadKuhn12} under the assumption that sampled individuals are not
necessarily removed from the process at sampling. This results in a type of
phylogenetic tree that may contain degree two nodes called \emph{sampled
ancestors}. 

An important characteristic of the models we consider here is incomplete
sampling, i.e., we only observe a part of the tree produced by a process.
Consider a birth-death process that starts at some point in time (the time of
origin) with one lineage and then each existing lineage may bifurcate or go
extinct. Further, the lineages are randomly sampled through time. An example of
\emph{a full tree} produced by such process is shown in
Figure~\ref{fig:fullsampledtree} on the left. We have information  only about
the portion of the process that produces the samples, shown as labeled nodes,
and do not observe the full tree. Thus we only consider this subtree relating to
 the sample, which is called the \emph{reconstructed tree} (or the \emph{sampled
tree}) and is shown on the right of Figure~\ref{fig:fullsampledtree}.

\subsubsection*{The sampled ancestor birth-death model}

Here we describe a serially-sampled birth-death model with sampled ancestors
\cite{Stadler2010JTB, Stadler2011R0}.

The process begins at the time of origin $t_{or} >  0$ measured in time units before the present. 
Moving towards the present, 
each existing lineage bifurcates or goes extinct according to two
independent Poisson processes with constant rates $\lambda$ and  $\mu$,
respectively. Concurrently, each lineage is sampled with Poisson rate $\psi$ and
 is removed from the process at sampling with probability $r$. The process is stopped at time $0$. This process can
be used to model the transmission of infectious disease and we call it \emph{the transmission birth-death process}.

The transmission process produces trees that have degree two nodes corresponding
to sampling events when a lineage was sampled but was not removed. We call these
trees \emph{sampled ancestor trees} (whether or not any sampled ancestors are
present).  The reconstructed tree  has degree-two nodes when a lineage is
sampled but not removed and then it, or a descendent lineage, is sampled again. 
The reconstructed tree in Figure~\ref{fig:fullsampledtree} (on the right) is an example 
of a sampled ancestor tree. Note that the root of a sampled ancestor tree
is the most recent common ancestor of the sampled nodes and therefore it may be a sampled 
node. There is no origin node in the tree because the 
time of origin is a model parameter and not an outcome of the process. 

A tree (or genealogy) $g$ consists of the discrete component $\mathcal T$,
which is called \emph{a tree topology}, and the continuous component $\bar
\tau$, which is called \emph{a time vector}. The tree topology of a sampled
ancestor tree is \emph{a sampled ancestor phylogenetic tree}, which is a ranked
labeled phylogenetic tree with labeled degree-two vertices (a rigorous
definition of a sampled ancestor phylogenetic tree can be found
in~\cite{Gavr2013}, where it is called an FRS tree). The time vector is a
real-valued vector of the same dimension as the number of ranks (nodes) in the
tree topology and with coordinates going in the descending order so that each
node in the tree topology can be unambiguously assigned a time from the time
vector. 

Further, we have three types of nodes: bifurcation nodes, sampled tip nodes,
sampled internal nodes. Let $m$ be the number of leaves, then $m-1$ is the
number of bifurcation events. Let $\bar x = (x_1, \ldots, x_{m-1})$ be a vector
of bifurcation times, where $x_{m-1} <\ldots < x_1$. Let $\bar  y = (y_1,
\ldots, y_m)$ be a vector of tip times, where $y_m < \ldots < y_1$. Further let
$\bar  z = (z_1, \ldots, z_k)$ be a vector of times of sampled two degree nodes,
where $z_k < \ldots < z_1$ and $k$ is the number of such nodes. Then $\bar \tau$
can be obtained by combining elements of $\bar x$, $\bar y$, and $\bar z$ and
ordering them in the descending order (see also
Figure~\ref{fig:fullsampledtree}). A genealogy may be written as $(\mathcal
T, \bar x, \bar y, \bar z)$. 

Stadler et al. \cite{Stadler2011R0} derive the density  of a genealogy $g =
(\mathcal T, \bar x, \bar y, \bar z)$ given the transmission birth-death process parameters $\lambda,\mu,\psi,r$ and time of origin
$t_{or}$. In \cite{Stadler2010JTB}, it was indicated that we should also condition on the event, $S$, of sampling at least one individual
because only non-empty samples are observed. 
The density is

\begin{equation}\label{epid} f[g | \lambda, \mu, \psi, r, t_{or}, S] = \frac 1 {(m+k)!} \frac {(\psi(1-r))^k q(t_{or})}
 {1- p_0(t_{or})}    \prod_{i=1}^{m-1} 2
\lambda q(x_i) \prod_{i=1}^m \frac{\psi(r + (1-r)p_0(y_i))}{q(y_i)}, \end{equation}
where the function $p_0(x)$ is the probability that an individual has no sampled
descendants for a time span of length $x$ so that $$p_0(x) = \frac { \lambda +
\mu + \psi + c_1 \frac {e^{-c_1x}(1-c_2) - (1+c_2)}{e^{-c_1x}(1-c_2)  +
(1+c_2)}}{2\lambda}$$ where $$c_1 = |\sqrt{(\lambda - \mu -\psi)^2 + 4 \lambda
\psi}|, \quad c_2 = - \frac{\lambda - \mu -\psi}{c_1}$$ and $$ q(x) = \frac {4} {2(1-c_2^2)
+ e^{-c_1x} (1-c_2)^2 + e^{c_1x}(1+c_2)^2}.$$ 
Throughout this paper, we consider non-oriented labeled trees. So equation~\eqref{epid} 
differs from the equation on page 350 in \cite{Stadler2011R0}, written for oriented trees, 
by a factor accounting for the switch from oriented to labeled trees  and also by the term 
for conditioning on $S$. Note also that the definition of the function $q$ here is different from
the definition in~\cite{Stadler2011R0}.

We show in the Supporting Information that function~\eqref{epid} depends only on
three parameters: $\lambda - \mu -\psi$, $\lambda \psi$, and $\psi (1-r)$, and
does not depend on parameters $\lambda$, $\mu$, $\psi$ and $r$ independently.
This means that the tree model is unidentifiable but, as we show in simulation
studies, if we specify one of the parameters we can estimate the others. 

When applying this model to  data, we typically shift time such that the most recent tip occurs at present, $y_m=0$, as 
we often do not have information about the length of time between the last sample 
and the end of the sampling effort. 
This is done to reduce our set of unknown quantities by one (namely setting $y_m=0$). 

We extend the model to allow the possibility of sampling individuals at present,
where each lineage at time 0 is sampled with probability $\rho$.  This process,  with $r$
set to zero, can be used to model speciation processes with fossilisation
events, hence it is called \emph{the fossilized birth-death
process}~\cite{Heath2013}. Let $S_\rho$ denote the event of sampling at least
one individual at present then according to~\cite{Stadler2010JTB} and accounting for labeled trees:
\begin{equation}\label{withrho0} f[g | \lambda, \mu, \psi, \rho, t_{or}, S_\rho]
= \frac 1 {(m+k)!}  \frac {\psi^k  \rho^n q(t_{or}) } {1 - \hat p_0(t_{or})} 
\prod_{i=1}^{m + n-1} 2 \lambda q(x_i) \prod_{i=1}^{m} \frac{\psi p_0(y_i)}{q(y_i)} \end{equation} where $n$ is the number of $\rho$-sampled
tips and 
$$\hat p_0(t_{or}) =1- \frac {\rho(\lambda - \mu)}{\lambda \rho +
(\lambda(1-\rho) - \mu) e^{-(\lambda-\mu)t}}.$$
In contrast to  the transmission birth-death process, where only three out of the four parameters $\lambda$, $\mu$, $\psi$, and $r$ can be inferred, under the fossilized birth-death process, all four parameters $\lambda$, $\mu$, $\psi$, and $\rho$ can be identified from the phylogeny.

It is possible to re-write density \eqref{withrho0} conditioning on the time of
the most recent common ancestor of sampled individuals rather than  conditioning
on the time of origin. 
In this case, we discard trees in which the root is a sampled node. 
In other words, we assume that the process starts with a
bifurcation event and we only consider trees with sampled nodes on both sides of
the initial bifurcation event. Then the time of the most recent common ancestor
of the sample is the time of the root, $x_1$. Accounting for labeled trees, the probability density function 
can thus be written \cite{Stadler2010JTB} as:

\begin{equation}\label{withrho1} f[g | \lambda, \mu, \psi, \rho, x_1, S_\rho]
= \frac 1 {(m+k)!}  \frac {\psi^k \rho^n q(x_1)} {\lambda(1 - \hat p_0(x_1))^2} \prod_{i=1}^{m + n-1} 2 \lambda q(x_i) \prod_{i=1}^{m} \frac{ \psi
p_0(y_i)}{q(y_i)}. \end{equation}

\subsubsection*{The sampled ancestor skyline model}

Here we extend the sampled ancestor birth-death  model so that  parameters may
change through time in a piecewise manner. This model combines two models from
\cite{Stadler2011R0} and \cite{StadKuhn12}.

Let  there be $\intNum$ time intervals $[t_i, t_{i-1})$ for $i \in \{1, \ldots, \intNum\}$ defined 
by vector $\bar t = (t_0, \ldots, t_{\intNum-1})$ and $t_\intNum=0$ with $t_\intNum < t_{\intNum-1} < \ldots < t_1 < t_0$ 
(where $t_0$ plays the role of the origin time, i.e., $t_{or}=t_0$).
We use notation $t_\intNum$ for time zero only for convenience and do not include it as   
a model parameter. 
Within each interval $[t_i, t_{i-1})$,  $i \in \{1, \ldots, \intNum\}$ the constant birth-death parameters
$\lambda_i$, $\mu_i$, $\psi_i$, and $r_i$ apply.  At the end of each interval at
times $t_i$, $i \in \{1, \ldots, \intNum\}$, each individual may be sampled with
probability $\rho_i$ (see also Figure~\ref{fig:fullsampledtree}). 
Thus, the model has $6\intNum$ parameters: $\bar \lambda$, $\bar
\mu$, $\bar \psi$, $\bar r$, $\bar \rho$, and $\bar t$. We prove in the
Supporting Information that the probability density of a reconstructed sampled
ancestor tree $g = (\mathcal T | \bar x, \bar y, \bar z)$ produced by this
process is (not conditioned on survival), 
\begin{equation} \label{4m}
\begin{multlined}
f[g |  \bar \lambda, \bar \mu, \bar \psi, \bar r, \bar \rho, \bar t] = \frac 1 {(m+M+k+K)!} \times \\
 q_1(t_0)  \prod_{i=1}^k (1-r_{\mathbf i_{z_i}})\psi_{\mathbf i_{z_i}}
\prod_{i=1}^{m + M-1}2 \lambda_{\mathbf i_{x_i}} q_{\mathbf i_{x_i}}(x_i)
\prod_{i=1}^{m}  \frac{\psi_{\mathbf i_{y_i}}(r_{\mathbf i_{y_i}} +
(1-r_{\mathbf i_{y_i}})p_{\mathbf i_{y_i}}(y_i))}{q_{\mathbf i_{y_i}}(y_i)}
\times \\
\prod_{i=1}^\intNum
((1-\rho_i)q_{i+1}(t_i))^{n_i}\rho_i^{N_i}((1-r_{i+1})q_{i+1}(t_i))^{K_i} (r_{i+1} +
(1-r_{i+1})p_{i+1}(t_i))^{M_i}
\end{multlined}
\end{equation} where $m$ is the number of $\psi$-sampled tips;
$k$ is the number of $\psi$-sampled nodes that have sampled descendants; $M_i$
is the number of tips sampled at time $t_i$;  $K_i$ is the number of nodes
sampled at time $t_i$ and having sampled descendants; $N_i = K_i + M_i$ is the
total number of nodes sampled at time $t_i$; $n_i$ is the number of lineages
present in the tree at time $t_i$ but not sampled at this time for $i~\in~\{1,
\ldots, \intNum\}$; $M = \sum\limits_{i=1}^\intNum M_i$; $K = \sum\limits_{i=1}^\intNum K_i$; ${\bf i}_x$ is an index such that
$t_{{\bf i}_x} \le x < t_{{\bf i}_x - 1}$; and functions $p_i$ and $q_i$ are
defined presently.

The probability $p_i(t)$ that an individual alive at time $t$ has no sampled
descendants when the process is stopped (i.e., in the time interval $[t_\intNum, t]$), with $t_i \le t <
t_{i-1}$ ($i=1, \ldots, \intNum$) is $$p_i(t) = \frac { \lambda_i + \mu_i + \psi_i -
A_i \frac {e^{A_i(t - t_i)} (1+B_i) - (1-B_i)}{e^{A_i(t - t_i)} (1+B_i) +
(1-B_i)} } {2 \lambda_i}$$ where $$A_i
= \sqrt {(\lambda_i - \mu_i - \psi_i)^2 + 4 \lambda_i \psi_i}$$ and $$ B_i =
\frac{(1-2(1-\rho_i)p_{i+1}(t_i))\lambda_i + \mu_i + \psi_i} {A_i}$$ for $i = 1,
\ldots, \intNum$ and $p_{l+1}(t_l) = 1$. Further, $$q_i(t) = \frac {4 e^{A_i(t-t_i)}}
{(e^{A_i(t-t_i)}(1+B_i) + (1-B_i))^2}$$ for $i = 1, \ldots, \intNum$. Note that
$q_{\intNum+1}(t_\intNum)$ does not appear in the equation because $n_\intNum$ (which is the
number of lineages present in the tree at time $t_\intNum$ but not sampled at that
time) and $K_\intNum$ (which is the number of two degree nodes at time $t_\intNum$) are always zero. 
Also, $r_{\intNum+1}$ cancels out because $K_\intNum$ is always zero and $p_{\intNum+1}(t_\intNum) = 1$.

We obtain two special cases of this general model that correspond to the skyline
variants of the transmission and fossilized birth-death processes by setting
some of the parameters to  zero.

To obtain the skyline transmission process, we set $\bar \rho = 0$. This 
implies $K_i=0$, $M_i=0$, and $N_i=0$ for all $i$. As before, we condition on
the event, $S$, of sampling at least one individual, where $f[S | \bar \lambda,
\bar \mu, \bar \psi, \bar t] = 1 - p_1(t_0)$. The tree density is
\begin{equation}\label{epidSky} 
\begin{multlined}
f[g |\bar \lambda, \bar \mu, \bar \psi, \bar r,
\bar t, S] = \frac 1 {(m+M+k+K)!} \times \\ \frac {q_1(t_0)}{1-p_1(t_0)}  \prod_{i=1}^k (1-r_{\mathbf
i_{z_i}})\psi_{\mathbf i_{z_i}} \prod_{i=1}^{m -1}2\lambda_{\mathbf i_{x_i}}
q_{\mathbf i_{x_i}}(x_i) \prod_{i=1}^{m}  \frac{\psi_{\mathbf
i_{y_i}}(r_{\mathbf i_{y_i}} + (1-r_{\mathbf i_{y_i}})p_{\mathbf
i_{y_i}}(y_i))}{q_{\mathbf i_{y_i}}(y_i)} \prod_{i=1}^\intNum (q_{i+1}(t_i))^{n_i} 
\end{multlined}
\end{equation}
 We show in the Supporting Information that \eqref{epidSky} can be 
re-parameterised with \begin{equation} \label{repar1} \begin{aligned} &d_i =
\lambda_i - \mu_i - \psi_i  & \text{ for  }&i = 1, \ldots, \intNum  \\ &f_i =
\lambda_i \psi_i  & \text{ for }&i = 1, \ldots, \intNum \\ & g_i = (1-r_i) \psi_i &
\text{ for }&i  = 1, \ldots, \intNum, \mbox{ and}\\ & k_i = \frac
{\lambda_i}{\lambda_{i+1}} & \text{ for }&i = 1, \ldots, \intNum - 1. \end{aligned}
\end{equation} 
Thus,  of the original $4\intNum$  parameters, only  $4\intNum-1$ may be
estimated.

For the skyline fossilized birth-death model, we set $\rho_1, \ldots,
\rho_{\intNum-1} = 0$ and $\bar r=0$ and condition on $S_\rho$,  the event of sampling
at least one extant individual (i.e., at time $t_\intNum$). The tree density becomes

\begin{equation}
\begin{multlined}\label{fossilSky}  f[g |\bar \lambda, \bar \mu, \bar \psi,
\rho_\intNum, \bar t, S_\rho]  = \frac 1 {(m+M+k+K)!} \times \\ \rho_\intNum^{N_\intNum} \frac{q_1(t_0)}{1- \hat
p_1(t_0)}  \prod_{i=1}^k \psi_{\mathbf i_{z_i}} \prod_{i=1}^{m + N_\intNum-1}2
\lambda_{\mathbf i_{z_i}} q_{\mathbf i_{x_i}}(x_i) \prod_{i=1}^{m} 
\frac{\psi_{\mathbf i_{y_i}}p_{\mathbf i_{y_i}}(y_i)}{q_{\mathbf i_{y_i}}(y_i)}
\prod_{i=1}^\intNum (q_{i+1}(t_i))^{n_i} 
\end{multlined} 
\end{equation} where $$ \hat p_1(t) = p_1(t |
\bar \psi = 0).$$ 
This probability density can be re-parameterised 
 as in~\eqref{repar1} with one additional equation $h = \lambda_\intNum \rho_\intNum$ (see  Supporting Information). Now there are $3\intNum+1$ initial parameters: $\bar \lambda$, $\bar \mu$, $\bar \psi$, and $\rho_\intNum$ and $4\intNum$ equations defining the re-parameterisation.
 Since $r_i =0$, $g_i$ defines $\psi_i$, then $f_i$ yields $\lambda_i$, then $d_i$ yields $\mu_i$,  $h$ yields $\rho_\intNum$ and the $\intNum-1$ equations for $k_i$ are not needed at all, thus $3\intNum+1$ equations define the reparameterization of the $3\intNum+1$ parameters, thus this re-parameterisation does not  reduce the number of parameters.  
 
\subsection*{Markov chain Monte Carlo Operators}

We introduce a number of operators  to explore  the space of sampled ancestor
trees with a fixed number of sampled nodes. Throughout this section, we denote
the height (or the age) of a node $a$ by $\tau_{a}$.


\subsubsection*{Extension of Wilson Balding operator}

We extend  the Wilson Balding  operator  (a type of subtree prune and
regraft)~\cite{WilsBald}  to sampled ancestor trees so that it is identical to 
the original operator when it is restricted to trees with no sampled ancestors.
The operator may propose a significant change to a tree and may change its
dimension, that is, the number of nodes in the tree.  We use the reversible jump
formalism of~\cite{Green1995}.

First, we describe a reduced version of the operator that does not change the
root. Let $g = (\mathcal T, \bar \tau)$ be a genealogy. There are three steps in
proposing a new tree. \begin{enumerate} \item Choose edge $e_1 = \lra{p_1, c_1}$
uniformly at random such that $p_1$ is not the root ($p_1$ is the parent of
$c_1$). Recall that we do not consider the origin as a node belonging to the tree.   
\item Choose either edge $e_2 = \lra{p_2, c_2}$ or leaf $l$. The method
of selection depends on the type of $e_1$: \begin{enumerate}  \label{pruning}
\item\label{branch} if node $c_1$ has a sibling then, uniformly at random from
all possibilities, either choose edge $e_2$ which is not adjacent to $e_1$ and
at least one end of which is above $c_1$ (i.e., $p_2$ is older than $c_1$) or
leaf $l$ which is older than $c_1$; \item\label{node} if node $c_1$  does not
have a sibling (so $p_1$ is a sampled node) then choose edge $e_2$ such that at
least one of its ends is older than $c_1$ or a leaf which is older than $c_1$
uniformly at random. \end{enumerate} If there is no such edge nor leaf,  do
nothing and propose no new tree. \item \label{attaching} If an item was chosen
in step 2, then prune the subtree rooted at node $p_1$ and reattach it to edge
$e_2$ or leaf $l$. When attaching to an edge, we draw a new height for the
parent of node $c_1$ uniformly at random from the interval
$[\max(\tau_{c_1},\tau_{c_2}),\tau_{p_2}]$.
\end{enumerate} Figure~\ref{fig: WBmove} illustrates pruning from a branch
(case~\ref{branch}) and from a node (case~\ref{node}) and attaching to a branch
and to a leaf. Let the resulting new genealogy be $g^* = (\mathcal T^*, \bar
\tau^*)$.

Now we extend this move to add the possibility of changing the root. We modify
the described procedure in two ways. First, we allow to choose $e_1$ for which
$p_1$ is the root at the first step. Second, we can also choose the root edge at
the second step, i.e., the edge which  connects the root with the origin. Although we do not usually consider this edge as a part of the tree, for convenience  we assume we can choose it. 
In this case, the parent of node $c_1$ becomes a new root with the
height obtained by drawing a difference between the new root height and the old
root height from the exponential distribution with rate $\lambda_e$.

To calculate the Hastings ratio, $\frac {q(g^* | g)}{q(g|g^*)}$, for this move we derive the proposal density, $q(g^* | g)$. $q(g^* | g)$    is a product of the probability of choosing
edge $e_1$ at the first step, the probability of choosing edge $e_2$ (or leaf
$l$) at the second step, and the probability density of choosing a new age at
the third stage (or one if we attach to a leaf).

Let $D$ denote the number of edges in tree $\mathcal T$. Then the contribution
of the first step to the proposal density is $\frac 1 D$. The probability at the second step 
 depends on the number of  choices there. However, since we choose the same subtree to prune in the forward and backward moves and then, at step two, choose from the items remaining in the tree after pruning the subtree,
the second terms in the product will cancel in the ratio and we do not calculate
them.

The contribution of the third step depends on the type of a move. When attaching
to a leaf it is equal to one. When attaching to a branch it is equal to the
probability density of a random variable $\tau^{new}$ which defines a new age
for the parent of $c_1$. So it is either $$f(\tau^{new}) = \frac 1 {|I_1|},
\text{ where $I_1 = (\tau_{p_2}, max \{\tau_{c_1}, \tau_{c_2} \})$}$$
or $$f(\tau^{new}) = \begin{cases} e^{-\lambda_eh_1},  &\text {if $h_1 =
\tau^{new} - \tau_1  > 0$}; \\ 0, &\text{otherwise.} \end{cases}$$ where
$\tau_{a}$ denotes the height of node $a$.  The  Hastings ratio for the different cases is summarised in Table~\ref{tab:hastings}.

\subsubsection*{Leaf to sampled ancestor jump}

This is a dimension changing move that jumps between two trees where a particular sampled node is a sampled ancestor in one tree and  a leaf in the other.
It randomly chooses a sampled node $i$.  
If $i$ is a sampled ancestor, we propose a new tree where $i$ is a leaf as follows.  Let $p$ be the parent of $i$ and $c$ be the child of $i$. Create a new node $j$ with  height chosen uniformly at random  from the interval $[\tau_i,\tau_p]$.  Make $p$ the parent of $j$ and make $i$ (now a leaf) and $c$ the children of $j$.  

If $i$ is a leaf then it becomes  a sampled ancestor replacing its parent if possible. It is not be possible if $i$ has no sibling or the sibling of $i$ is older than $i$. When this is possible, let node $b$ be the parent of $i$ in the proposed tree.   The Hastings ratio for this move is  $\frac 1 {\tau_p - \tau_i}$ when $i$ is a sampled ancestor and $(\tau_{b} - \tau_i)$ when $i$ is a leaf.

Note that these same trees can be proposed under  the extended Wilson Balding
operator. We introduce this more specific, or local, operator to improve  mixing. 

\subsubsection*{Other operators}

We  extend the narrow and wide exchange operators  used in BEAST2 to
sampled ancestor trees. The narrow exchange operator swaps a randomly chosen
node with its aunt if possible. It chooses a non-root
node $c$ such that its parent $p$ is not the root either. If the parent $b$ of
node $p$  is not a sampled node and, therefore, has another child $u$ and the
height of $u$ is less than the height of $c$  then we remove edges $\lra{p, c}$
and  $\lra{b, u}$ and add edges $\lra{p,u}$ and $\lra{b, c}$. Otherwise no tree
is proposed.  The wide exchange operator swaps two randomly chosen nodes along
with the subtrees descendant from these nodes if none of them is a parent to
another  one and the ages of the parents allow to swap the children. The Hastings
ratio is 1 for both operators. 

To propose height changes we use a scale operator and a uniform operator.  The scale operator scales non-sampled
internal nodes by a scale factor drawn from the uniform distribution on the
interval $(\frac 1 \beta, \beta)$, where $\beta > 1$. If the scaling makes some
parent node younger than its children then no tree is proposed. The Hastings ratio
for this operator is $\alpha^{k-2}$, where $\alpha$ is the scale factor
and $k$ is the number of internal non-sampled nodes (the number of scaled
dimensions).   The uniform  operator  proposes a new height for internal nodes chosen uniformly at random from the interval bounded by the heights of the parent and the oldest child of
the chosen node. The Hastings ratio for this operator is 1.

\subsection*{Simulations and empirical data analysis}

\subsubsection*{Simulating the fossilized birth-death process}

We simulated 100 trees under the sampled ancestor birth-death model with
$\rho$-sampling and $r=0$. We fix the tree model parameters in this simulation:

\begin{center} \begin{tabular}{p{2cm}p{2cm}}

$\lambda = 1.5$ & $t_{or} = 3.5$ \\ $\mu = 0.5$ & $\rho = 0.7$ \\ $\psi = 0.4$ 
&   \\ \end{tabular} \end{center} Since the time of the origin is one of the
model parameters, we simulate each tree on the time interval of $3.5$. We
discard trees with less than five sampled nodes, which constitute 8\% of the
trees. The remaining trees have 55 sampled nodes on average. 
Then we
simulated sequences along each tree under the GTR model with a strict molecular clock
model and ran the MCMC with the sequences and sampled node dates as the input data.
For these runs, we use the re-parameterisation: \renewcommand{\arraystretch}{1.2}
\begin{equation}\label{drs} \begin{tabular}{ll} \text{\emph{net diversification
rate}}  & $d = \lambda-\mu = 1.0$  \\ \text{\emph{turnover rate}} & $\turnover = 
\frac \mu \lambda = 0.33 $   \\ \text{\emph{\sampleprop}} & $\fosp = \frac
\psi {\mu + \psi}$ = 0.44  \\ \end{tabular} \end{equation}
along with the time of origin, $t_{or}$ and $\rho $.  The \sampleprop~is the proportion of individuals which are sampled before they are removed, 
meaning it is the proportion of sampled individuals out of all individuals in the full tree.
 Since this  set of parameters has only two parameters ($d$ and $t_{or}$) which are on the unbounded interval $(0,
\infty)$  with the others defined on  $[0,1]$, this is a convenient parametrisation for defining 
  uninformative priors. For the tree prior
distribution we use the distribution with probability density
function~\eqref{withrho0} multiplied by priors for hyper parameters: $\turnover$, $\fosp$,
and $\rho \sim$ Uniform(0,1) for  and  Uniform(0,1000) for $d$ and $t_{or}$.

 \renewcommand{\arraystretch}{1} 

We estimate a tree, tree model parameters, GTR rates, and the clock rate. The parameters of interest include tree model parameters ($d$, $\turnover$, $\fosp$
and $\rho$) and features of the tree including the time
of the origin ($t_{or}$), tree height and number of sampled ancestors.

\subsubsection*{Simulating the transmission birth-death process}

In this process, there is no $\rho$-sampling but $r>0$. Here we again use $d$,
$\turnover$, and $\fosp$ parametrisation defined by Equations~\eqref{drs}. We fix the time of the origin, $t_{or} = 3$, and
draw the tree model parameters from the distributions \begin{center}
\begin{tabular}{lcl} $d$ & $\sim$ & Uniform(1,2)   \\ $\turnover$ & $\sim$ &
Uniform(0,1)     \\ $\fosp$ & $\sim$ & Uniform(0.5,1)   \\ $r$ & $\sim$ &
Uniform(0,1) \\ \end{tabular} \end{center} 
and simulate a tree under the
transmission birth-death process with drawn parameters on the fixed time
interval. We choose these prior distributions because they cover a wide range of 
parameter combinations of interest and produce trees of reasonable size.
We discard trees with less than 5 or greater than 250 sampled nodes,
which constitute 21\% of the sample. In total, we report the results on 100
trees with the mean number of sampled nodes being 53. We simulate sequences along
each tree under the GTR model with a strict molecular clock.

In the MCMC runs, we fix the fossilisation proportion, $\fosp$, to its true value, as only three out of the four birth-death parameters can be inferred. The tree
prior distribution is~\eqref{epid} with uniform prior distributions for hyper
parameters $d$, $\turnover$, $\fosp$, and $r$, on the same intervals as above and
Uniform(0,1000) prior distribution for the time of the origin. We estimate the tree, tree model parameters, GTR rates and clock rate and assess the
estimates of the tree model parameters and properties of the tree.

\subsubsection*{Simulating under the sampled ancestor skyline model}

We  simulated  the skyline transmission process under three different sets of parameters and
then estimated the parameters in MCMC with fixed trees and with some parameters fixed. 
We have tried scenarios with two and three intervals, fixing either $r$ or $\psi$. 
In one scenario, only $\psi$ changes through time from zero to none-zero value and the other parameters 
stay constant. In the second  scenario, all parameters except $r$ change through time. 
In the final scenario, all parameters change through time and the whole vector $\bar r$ is fixed in
the inference. For a full description of the parameter and prior settings see the Supporting Information.

\subsubsection*{Bear dataset analysis}
We re-analyzed the bear dataset from \cite{Heath2013}. 
The  fossilized birth-death model we use is the same model as in the original analysis by Heath et
al.~\cite{Heath2013} but we use a strict clock  instead of a relaxed
clock model. We perform two analyses, both with a strict clock, using our implementation in 
BEAST2 and the implementation in  DPPDiv by
Heath et al. 

The tree prior density is~\eqref{withrho1} with transformed parameters $d$,
$\turnover$, and $\fosp$ for which we chose uniform priors and  $\rho=1$ is fixed.
We use the strict molecular clock model with and exponential prior for
the clock rate and the GTR model with gamma categories with uniform priors for
GTR rates and gamma shape.

The prior distributions in both analysis (in BEAST2 and DPPDiv) are all the same
except the priors for GTR rates and gamma shape. In  DPPDiv,
$$(\eta_{AC},\eta_{AG},\eta_{AT},\eta_{CG},\eta_{CT},\
\eta_{GT}) \sim \mbox{Dirichlet}(1,1,1,1,1,1)$$  In BEAST2, we fix
$\eta_{AG}$ to one and use Uniform(0, 100) priors for other rates.
We place a uniform prior for gamma shape parameter in BEAST2 and exponential in
DPPDiv.

\subsubsection*{HIV 1 dataset analysis}
We re-analyzed  UK HIV-1 subtype B data from \cite{Hue2005}.
We use the skyline model without $\rho$-sampling and with one rate shift time
(in 1999) because no samples were taken  before this time. The tree prior
density is~\eqref{epidSky}.  We use the following parameterisation and prior distributions:
 \renewcommand{\arraystretch}{1.2} \begin{center}
\begin{tabular}{llcl} \emph{effective reproductive number} & $R_0 = \frac
\lambda {\mu + \psi r}$ & $\sim$ & LogNormal(0.5,1)\\ \emph{total 
removal rate} & $\delta = \mu + \psi r$  & $\sim$ & LogNormal(-1,1)  \\
\emph{\leafsampleprop} & $\leafp = \frac {\psi r} {\mu + \psi r}$ & $\sim$ &
Uniform(0,1) \\ \emph{removal probability} & $r$ & $\sim$ & Beta(5,2) \\
\emph{time of origin} & $t_{or}$ & $\sim$ & LogNormal(3.28,0.5) \\
\end{tabular} \end{center} \renewcommand{\arraystretch}{1} 
The \leafsampleprop~is the proportion of individuals who are 
removed by sampling out of all removed individuals, thus
it is the proportion of sampled tips out of all tips in the full tree.
The parameterisation and prior distributions are different from 
the distributions used in simulation studies. We chose the prior distributions 
for $R_0$, $\delta$, and $\leafp$ following~\cite{StadKuhn12} and 
the prior distribution for $r$ assuming that diagnosed patients are likely to change 
their behaviour. Recall that this model is unidentifiable and we need to 
have a good prior knowledge about at least one of the parameters. 

We suppose that only \leafsampleprop{} changes through time and it changes from zero to a
non-zero value. Other parameters stay constant through time. We use a
GTR model with gamma categories and a molecular clock model with the
substitution rate fixed to $2.48 \times 10^{-3}$ as was estimated
in~\cite{StadKuhn12}.

\section*{Results}

We developed a Bayesian MCMC framework for phylogenetic inference with models
that allow sampled ancestors. We implemented a sampled ancestor MCMC algorithm  as
an add-on to software package BEAST2~\cite{Beast2} thereby making several sampled ancestor
birth-death prior models available to users. We test the accuracy and 
limitations of these models in simulation studies and apply the sampler to infer
divergence times for a biological dataset comprised of extant species and fossil
samples and to an HIV dataset. In case of the fossil-bear dataset, we compare the
results obtained from our implementation to the result obtained from an
alternative implementation \cite{Heath2013}.

\subsection*{Simulation of sampled ancestor models}

We simulated the sampled ancestor birth-death process and sampled ancestor skyline 
process under different scenarios. In all cases, the simulations show that we can
recover the tree and model parameters from sequence data and sampling
times.

For some variants of the model, one of the tree model parameters has to be fixed
for the inference to its true value as was discussed in the Methods section.
Simulation studies show that fixing one of the parameters allows to recover the
remaining
parameters. In particular, we showed that function~\eqref{epid}
depends exactly on three parameters because fixing $\psi$ allows recovery of
$\lambda$, $\mu$ and $r$ while function~\eqref{withrho0} depends on all four
parameters: $\lambda$, $\mu$, $\psi$ and $\rho$. We also simulated scenarios where
we fixed different parameters, for example, $r$ or $\psi$. All scenarios give
accurate estimates of the remaining parameters.

We present here detailed results of two sets of simulations: one
for the fossilized birth-death process and another one for the transmission
birth-death process. Further simulation results can be found in the 
Supporting Information.

In these two scenarios, we first simulated trees and then sequences along the
trees. Then we ran the sampler to recover tree model parameters and genealogies
from simulated data comprised of sequences and sampling times.  We assess the
results by calculating summary statistics including: the median estimate of a parameter, the error and relative bias of the median estimate, and
the relative 95\% highest posterior density (HPD) interval width. We assess whether the true value belongs
to the 95\% HPD interval. To summarise the results from a collection of runs we
calculate the medians of the summary statistics (i.e, the median of the
estimated medians, the median of the relative errors and so forth) and count the number
of times when the true value belongs to the 95\% HPD interval.  To assess the
power of the method with regard to estimation of sampled ancestors we performed
the receiver operating characteristic analysis~\cite{Swets} which estimates false positive and false negative error rates under different decision rules.

For the fossilized birth-death process (the process with $\rho$-sampling and
zero removal probability), we simulated a set of trees under a fixed set of the
tree model parameters. Each parameter was estimated and, in the worst case,  the  median of the relative errors for all runs was  0.22. The median of the relative errors for tree properties, such as the time of
origin, tree height and number of sampled ancestors, was at
most 0.09. The true parameters and tree properties were within the estimated 95\%
HPD intervals at least 95\% of the time in all cases. The estimates of the number of sampled
ancestors and the tree height are shown in Figure~\ref{fig: TreeChar}. Figure~\ref{fig: TurnoverHPD} shows how the amount of
uncertainty in turnover rate estimates decreases with the size of the tree
(i.e., with the number of sampled nodes).

To simulate from the transmission birth-death process, i.e., the
sampled ancestor birth-death process without $\rho$-sampling and with non-zero
removal probability, we draw tree model parameters from uniform
distributions for each simulation. The tree model parameters were estimated with
the maximum median of relative errors of 0.28 and, for the tree properties, of 0.06. In
the worst case a parameter or a tree property was inside the 95\% HPD
interval  92\% of the time. The estimates of the parameters are shown in
Figure~\ref{fig: Parameters}.

We used the data simulated from the transmission process to perform the receiver
operating characteristic (ROC) analysis of the sampled ancestor predictor, which makes
a prediction relying on the posterior distribution of genealogies. A node is
predicted to be a sampled ancestor with a probability calculated as a fraction
of trees in the posterior sample in which the node is a sampled ancestor. The
total number of non-final sampled nodes (we exclude the most recent node in each case
as it can not be a sampled ancestor) in all simulated trees was 5225 and
1814 of these nodes were sampled ancestors.  The ROC curve constructed from this
data and predictions obtained from the MCMC runs is shown in Figure~\ref{fig:
Roccurve}.

\subsection*{Application of the fossilized birth-death model to a bear dataset}

In~\cite{Heath2013}, Heath et al.  analysed a bear dataset comprised of sequence
data of 10 extant species and occurrence dates of 24 fossil samples  to estimate
divergence times in a Bayesian MCMC framework. They assumed that the tree
topology on the extant species is known and each fossil sample is assigned to a
clade in the tree. Here, we replicate this analysis using the MCMC
implementation of the fossilized birth-death model in BEAST2.  

We run two analysis with BEAST2 and with the DPPDiv implementation by Heath et
al. under the same model. The tree topology relating all living bear species and
two outgroup species is fixed in the analyses and we estimate the divergence
times and tree model parameters. The estimates are the same in both analyses as
expected.  The estimated divergence times are shown in Figure~\ref{fig:
Divergence}.  

\subsection*{Application of sampled ancestor Skyline model to HIV dataset}

We analysed an HIV-1 subtype B dataset from the United Kingdom, consisting of 62
sequences that were originally analysed in~\cite{Hue2005} and later analysed
using the skyline model without sampled ancestors in~\cite{StadKuhn12}. The
posterior probability of being a sampled ancestors for three sampled nodes was 
 61\%, 59\%, and 49\%. For other sampled nodes the probability was less than 4\%. 
There is positive evidence that three sampled nodes with high posterior probabilities 
are sampled ancestors. The Bayes factors are 5.9, 8.7, and 4.2, respectively. 

We chose a random tree among the trees in the posterior sample that have exactly
these three nodes as sampled ancestors. The tree is shown in Figure~\ref{fig:
RandomTree}. It is noticeable that all three sampled ancestors are clustered on
a clade of 16 (out of 62) samples. The median of the posterior distribution of
the number of sampled ancestors was 2 with 95\% HPD interval $[1, 3]$. The
removal probability was estimated to be 0.74 with 95\% HPD interval $[0.46,
0.97]$.

\section*{Discussion}

The MCMC sampler developed here enables analyses under models in which the
probability of one sample being the direct ancestor of another sample is not negligible.
These models are useful for describing infectious transmission processes,
including identifying transmission chains. They are
also useful for estimating divergence times for macroevolutionary data in the
presence of fossil samples.

In the analysis of a phylogeny of bears we show that the sampler can be applied
to data comprised of both fossil and recent taxa to infer divergence times. This
dataset was previously analysed using the {\it fossilized birth-death model} by
Heath et al. \cite{Heath2013}. While the underlying model is the same and thus producing the same results,
there is a conceptual difference between the
two MCMC frameworks. In the analysis by Heath {\it et al}, MCMC was used to
integrate over fossil attachment times while the topological attachment of the
fossils was integrated out analytically. To achieve this, the topology
of the phylogeny relating the extant taxa had to be assumed to be known. In our
implementation, we integrate  over the trees relating fossil and extant taxa,
i.e., over both the fossil attachment times and topological attachment points, using
MCMC. In order to facilitate a direct comparison we  constrained the
topology of the extant species, however the sampler does not require this.
For datasets where the tree topology is well resolved, analytical
calculation results in faster mixing but when there is uncertainty in the extant
phylogeny, which is the more common case, our sampler can account for it.
Since the two implementations of the method were made completely independently of one another, 
this result also provides strong evidence that our implementation is sampling from the correct posterior distribution.

A natural extension to the analysis of the bear phylogeny would be to include
morphological data to inform the inference regarding the precise placement of
fossils on the tree \cite{Pyron2011, Wood2012},  however this requires models of
morphological character evolution \cite{Lewis2001,Ronq2012}. Another direction
for application of the sampler is using the skyline version of the fossilized
birth-death model to analyse datasets where fossil samples come from different
stratigraphic layers, so that rates of fossilisation and discovery may change
through time. Fossils are better preserved in some layers than in other layers and
therefore the sampling rate varies from layer to layer and this can be modelled
as a skyline plot.

Simulation studies show that the MCMC sampler for sampled ancestor trees allows
for the detection of direct ancestors within the sample. In epidemiological
studies, sampled ancestors can be interpreted as sampled individuals that have later infected
other individuals. In the analysis of the HIV dataset, we equated the
transmission tree directly with the viral gene tree. This approximation is good
enough to demonstrate the method. But for chronic infectious diseases such as
Hepatitis C and HIV where the genetic diversity of the pathogen population
within a single host can be substantial (e.g.
\cite{Shankarappa1999,Vrancken2014}) the inferential power would be improved by
a hierarchical model that explicitly models the difference between the sampled
ancestor) transmission tree and the (binary) viral gene tree. Regardless of the
modelling details, such analyses allow for the estimation of the removal at sampling parameter $r$,
which controls the prevalence of sampled ancestors. In most situations this
parameter reflects the probability with which patients remain able to cause
further infections after they were diagnosed.

Analytic calculations (presented in the Supporting Information) and
simulation studies show that there is a degree of non-identifiably of parameters
in the transmission birth-death models that include the $r$ parameter. In other
words, these models require  one of the parameters to be fixed or strongly constrained by prior information to
achieve unambiguous inference. In epidemiological studies with a known sampling scheme, a candidate parameter to fix is the \sampleprop{}. 
For epidemics with a period of infection, such as influenza, 
the total removal rate, $\delta$, could be fixed. 
Under the fossilized birth-death model,  it is possible to infer all the parameters of the tree
process prior when time-stamped comparative data is available. 
This is an interesting insight: if no fossils are available, we can only infer two out of the three parameters $\lambda,\mu,\rho$ (as the likelihood only depends on $\lambda-\mu, \lambda \rho$) while in presence of fossils we can estimate all four parameters  $\lambda,\mu,\rho,\psi$ (as the likelihood  depends on $\lambda-\mu, \lambda \rho, \lambda \psi, \psi$).
As sequence
data of fossil organisms is rarely available and thus information about fossil
locations on the tree obtained by phylogenetic modelling of morphological data
\cite{Lewis2001,Ronq2012} may become important to enable effective inference.

To our knowledge this is the first full implementation of an MCMC sampler of 
sampled ancestor trees and we anticipate that such samplers will form the 
computational basis for further developments in both fossil-calibrated divergence time dating
and phylodynamics.

\section*{Acknowledgments}

We would like to thank Tracy Heath for assistance with her implementation. We
also acknowledge the New Zealand eScience Infrastructure (NeSI) for use of their
high-performance computing facilities. AJD was funded by a Rutherford Discovery
Fellowship from the Royal Society of New Zealand. This research was also
partially supported by Marsden grant \#UOA1324 from the Royal Society of New
Zealand
(http://www.royalsociety.org.nz/programmes/funds/marsden/awards/2013-awards/).

\bibliography{sampledAncestorMCMC}

\newpage

\section*{Figures}
%

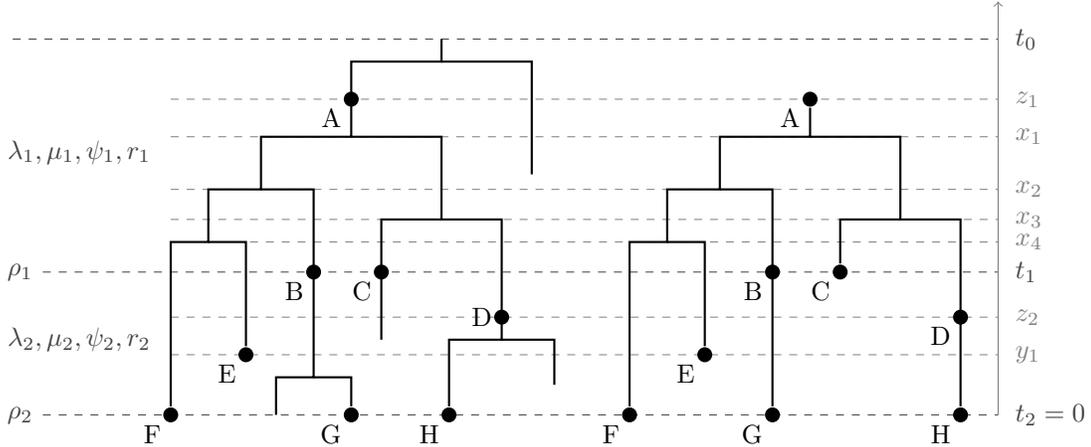
\begin{figure}[H] \begin{center} 
\usetikzlibrary{shapes,snakes}

\begin{tikzpicture}[thick]

\color{darkgray}


\draw[dashed, very thin] (-5.1, 7) -- (8, 7); 
\node[right] at (8.1, 7){$t_0$};
\draw[dashed, very thin] (-4.7, 3.9) -- (8, 3.9); 
\node[right] at (8.1, 3.9){$t_1$};
\draw[dashed, very thin] (-4.7, 2) -- (8, 2); 
\node[right] at (8.1, 2){$t_2 = 0$};

\node[right] at (-5.3, 5.5) {$\lambda_1, \mu_1, \psi_1, r_1$};
\node[right] at (-5.3, 3.0){$\lambda_2, \mu_2, \psi_2, r_2$};
\node[left] at (-4.7, 3.9){$\rho_1$};
\node[left] at (-4.7, 2){$\rho_2$};

\color{gray}

\draw[dashed, very thin] (-3, 6.2) -- (8, 6.2); 
\node[right] at (8.1, 6.2){$z_1$};
\draw[dashed, very thin] (-3, 5.7) -- (8, 5.7); 
\node[right] at (8.1, 5.7){$x_1$};
\draw[dashed, very thin] (-3, 5.0) -- (8, 5); 
\node[right] at (8.1, 5){$x_2$};
\draw[dashed, very thin] (-3, 4.6) -- (8, 4.6); 
\node[right] at (8.1, 4.6){$x_3$};
\draw[dashed, very thin] (-3, 4.3) -- (8, 4.3); 
\node[right] at (8.1, 4.3){$x_4$};
\draw[dashed, very thin] (-3, 3.3) -- (8, 3.3); 
\node[right] at (8.1, 3.3){$z_2$};
\draw[dashed, very thin] (-3, 2.8) -- (8, 2.8); 
\node[right] at (8.1, 2.8){$y_1$};

\draw[thin, arrows = ->, color=gray] (8, 2) -- (8, 7.5);

\color{black}

\begin{scope}[xshift=0.9cm]

\node[fill,circle, inner sep=2pt]  at (-1.5, 6.2)(sA){};
\node[anchor=north east] (Alet) at (sA) {\textcolor{black}{A}};
\node[fill,circle, inner sep=2pt]  at (-2, 3.9)(sB){};
\node[anchor=north east] (Alet) at (sB) {\textcolor{black}{B}};
\node[fill,circle, inner sep=2pt]  at (-1.1, 3.9)(sC){};
\node[anchor=north east] (Alet) at (sC) {\textcolor{black}{C}};
\node[fill,circle, inner sep=2pt]  at (0.5, 3.3)(sD){};
\node[anchor=east] (Alet) at (sD) {\textcolor{black}{D}};
\node[fill,circle, inner sep=2pt]  at (-2.9, 2.8)(sE){};
\node[anchor=north east] (Alet) at (sE) {\textcolor{black}{E}};

\node[fill,circle, inner sep=2pt]  at (-3.9, 2)(sF){};
\node[anchor=north east] (Alet) at (sF) {\textcolor{black}{F}};
\node[fill,circle, inner sep=2pt]  at (-1.5, 2)(sG){};
\node[anchor=north east] (Alet) at (sG) {\textcolor{black}{G}};
\node[fill,circle, inner sep=2pt]  at (-0.2, 2)(sH){};
\node[anchor=north east] (Alet) at (sH) {\textcolor{black}{H}};

\draw (-0.3, 7) -- (-0.3, 6.7) -- (-1.5, 6.7) -- (-1.5, 5.7) -- (-2.7, 5.7) -- (-2.7, 5.0) -- (-3.4, 5.0) -- (-3.4, 4.3) -- (-3.9, 4.3) -- (sF) ;
\draw (-3.4, 4.3) -- (-2.9, 4.3) -- (sE); 
\draw (-2.7, 5.0)  -- (-2, 5.0) -- (-2, 2.5) -- (-2.5, 2.5) -- (-2.5, 2); 
\draw (-2, 2.5) -- (-1.5, 2.5) -- (sG);
\draw (-1.5, 5.7) -- (-0.3, 5.7) -- (-0.3, 4.6) -- (-1.1, 4.6) -- (-1.1, 3);
\draw (-0.3, 4.6) -- (0.5, 4.6) -- (0.5, 3) -- (-0.2, 3) -- (sH);
\draw (0.5, 3) -- (1.2, 3) -- (1.2, 2.4);
\draw (-0.3, 6.7) -- (0.9, 6.7) -- (0.9, 5.2);

\end{scope}

\begin{scope}[xshift=7cm]

\node[fill,circle, inner sep=2pt]  at (-1.5, 6.2)(sA){};
\node[anchor=north east] (Alet) at (sA) {\textcolor{black}{A}};
\node[fill,circle, inner sep=2pt]  at (-2, 3.9)(sB){};
\node[anchor=north east] (Alet) at (sB) {\textcolor{black}{B}};
\node[fill,circle, inner sep=2pt]  at (-1.1, 3.9)(sC){};
\node[anchor=north east] (Alet) at (sC) {\textcolor{black}{C}};
\node[fill,circle, inner sep=2pt]  at (0.5, 3.3)(sD){};
\node[anchor=north east] (Alet) at (sD) {\textcolor{black}{D}};
\node[fill,circle, inner sep=2pt]  at (-2.9, 2.8)(sE){};
\node[anchor=north east] (Alet) at (sE) {\textcolor{black}{E}};

\node[fill,circle, inner sep=2pt]  at (-3.9, 2)(sF){};
\node[anchor=north east] (Alet) at (sF) {\textcolor{black}{F}};
\node[fill,circle, inner sep=2pt]  at (-2, 2)(sG){};
\node[anchor=north east] (Alet) at (sG) {\textcolor{black}{G}};
\node[fill,circle, inner sep=2pt]  at (0.5, 2)(sH){};
\node[anchor=north east] (Alet) at (sH) {\textcolor{black}{H}};

\draw (sA) -- (-1.5, 5.7) -- (-2.7, 5.7) -- (-2.7, 5.0) -- (-3.4, 5.0) -- (-3.4, 4.3) -- (-3.9, 4.3) -- (sF) ;
\draw (-3.4, 4.3) -- (-2.9, 4.3) -- (sE); 
\draw (-2.7, 5.0)  -- (-2, 5.0) -- (sG);
\draw (-1.5, 5.7) -- (-0.3, 5.7) -- (-0.3, 4.6) -- (-1.1, 4.6) -- (sC);
\draw (-0.3, 4.6) -- (0.5, 4.6) -- (sH);

\end{scope}

\end{tikzpicture} 
\caption{
{\bf Full tree versus reconstructed tree.} A full tree produced by the sampled ancestor birth-death process on the left and a reconstructed tree on the right. The sampled nodes are indicated by dots labeled by letters A through H.  Nodes A, B and D are sampled ancestors.
The reconstructed tree is represented by a sampled ancestor tree $g = (\mathcal T,
(x_1, x_2, x_3, x_4, y_1, z_1, z_2))$, where $\mathcal T$ denotes the ranked tree topology and
$\bar x$, $\bar y$, and $\bar z$ denote the node ages. In the reconstructed tree the root is a sampled node. 
In the skyline model, birth-death parameters vary from interval to interval. There are two intervals in this figure bounded by 
the time of origin $t_0$, parameter shift time $t_1$, and present time $t_2$. Between $t_0$ and $t_1$ parameters 
$\lambda_1$, $\mu_1$, $\psi_1$ and $r_1$ apply and between $t_1$ and $t_2$ parameters $\lambda_2$, $\mu_2$, $\psi_2$, and $r_2$.
There are additional sampling attempts at times $t_1$ and $t_2$ with sampling probabilities $\rho_1$ and $\rho_2$.} \label{fig:fullsampledtree} \end{center}
\end{figure}

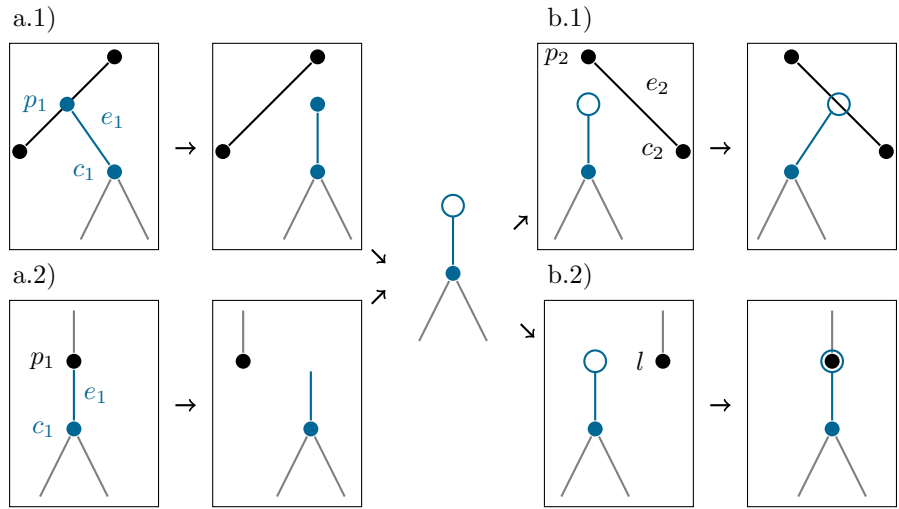
\begin{figure}[H] \begin{center}
\usetikzlibrary{decorations.markings}

\begin{tikzpicture}[thick, scale=0.9]

\begin{scope}

\node[right] at (0.05, 3.25) {a.1)};

\draw[thin] (0.15,-0.15) rectangle (2.35,2.9);
\node[fill,circle, inner sep=2pt, color=black] at (0.3,1.3) (CiP) {};
\node[fill,circle, inner sep=2pt, color=black] at (1.7,2.7) (PiP) {};
\draw[black] (CiP) -- (PiP);

\node[fill,circle, inner sep=2pt, color=MidnightBlue, label=left:\textcolor{MidnightBlue}{$c_1$}] at (1.7,1) (i) {};
\node[fill,circle, inner sep=2pt, color=MidnightBlue, label=left:\textcolor{MidnightBlue}{$p_1$}] at (1,2) (iP) {};
\draw[MidnightBlue] (i) -- node [above right] {$e_1$}  (iP);

\draw[gray] (1.2,0) -- (i) -- (2.2,0);

\draw[arrows = ->] (2.6,1.35) -- (2.9,1.35);

\draw[thin] (3.15,-0.15) rectangle (5.35,2.9);
\node[fill,circle, inner sep=2pt, color=black] at (3.3,1.3) (CiP) {};
\node[fill,circle, inner sep=2pt, color=black] at (4.7,2.7) (PiP) {};

\draw[black] (CiP) -- (PiP);

\node[fill,circle, inner sep=2pt, color=MidnightBlue] at (4.7,1) (i) {};
\node[fill,circle, inner sep=2pt, color=MidnightBlue] at (4.7,2) (iP) {};
\draw[MidnightBlue] (i) -- (iP);

\draw[gray] (4.2,0) -- (i) -- (5.2,0);

\draw[arrows = ->] (5.5,-0.15) -- (5.7,-0.35);

\end{scope}

\begin{scope}[yshift=-3.8cm]

\node[right] at (0.05, 3.25) {a.2)};

\draw[thin] (0.15,-0.15) rectangle (2.35,2.9);
\node[fill, circle, inner sep=2pt, color=black, label=left:{$p_1$}] at (1.1,2) (iP) {};
\draw[gray] (iP) -- (1.1, 2.75);

\node[fill,circle, inner sep=2pt, color=MidnightBlue,  label=left:\textcolor{MidnightBlue}{$c_1$}] at (1.1,1) (i) {};
\draw[MidnightBlue] (i) -- node [right] {$e_1$} (iP);

\draw[gray] (0.6,0) -- (i) -- (1.6,0);

\draw[arrows = ->] (2.6,1.35) -- (2.9,1.35);

\draw[thin] (3.15,-0.15) rectangle (5.35,2.9);
\node[fill,circle, inner sep=2pt, color=black] at (3.6,2) (iP) {};
\draw[gray] (iP) -- (3.6, 2.75);

\node[fill,circle, inner sep=2pt, color=MidnightBlue] at (4.6,1) (i) {};
\draw[MidnightBlue] (i) -- (4.6,1.85);

\draw[gray] (4.1,0) -- (i) -- (5.1,0);

\draw[arrows = ->] (5.5,2.85) -- (5.7,3.05);
\end{scope}


\begin{scope}[xshift=6.2cm]

\draw[MidnightBlue] (0.5,0.5) circle [radius=0.16];
\node[fill,circle, inner sep=2pt, color=MidnightBlue] at (0.5,-0.5) (i) {};
\draw[MidnightBlue] (i) -- (0.5,0.35);

\draw[gray] (0,-1.5) -- (i) -- (1,-1.5);

\end{scope}


\begin{scope}[xshift=11.2cm, yshift=-3.5cm]

\draw[arrows = ->] (-3.6,3.65) -- (-3.4,3.85);

\node[right] at (-3.25, 6.75) {b.1)};

\draw[thin] (-3.25, 3.35) rectangle (-0.95,6.4);
\node[fill,circle, inner sep=2pt, color=black, label=left:$c_2$] at (-1.1,4.8) (j) {};
\node[fill,circle, inner sep=2pt, color=black, label=left:$p_2$] at (-2.5,6.2) (jP) {};
\draw[black] (j)-- node[above right]{$e_2$} (jP);

\node[fill,circle, inner sep=2pt, color=MidnightBlue] at (-2.5,4.5) (i) {};
\draw[MidnightBlue] (-2.5,5.5) circle [radius=0.16];
\draw[MidnightBlue] (i)--(-2.5,5.35); 

\draw[gray] (-3,3.5) -- (i) -- (-2,3.5);

\draw[arrows = ->] (-0.7,4.85) -- (-0.4,4.85);

\draw[thin] (-0.15, 3.35) rectangle (2.05,6.4);
\node[fill,circle, inner sep=2pt, color= black] at (1.9,4.8) (j) {};
\node[fill,circle, inner sep=2pt, color=black] at (0.5,6.2) (jP) {};
\draw[black] (j)--(jP);

\node[fill,circle, inner sep=2pt, color=MidnightBlue] at (0.5,4.5) (i) {};
\draw[MidnightBlue] (1.2,5.5) circle [radius=0.16];
\draw[MidnightBlue] (i)--(1.1,5.4); 

\draw[gray] (0,3.5) -- (i) -- (1,3.5);
\end{scope}

\begin{scope}[xshift=11.2cm, yshift=-3.8cm]

\node[right] at (-3.25, 3.25) {b.2)};

\draw[arrows = ->] (-3.5,2.55) -- (-3.3,2.35);

\draw[thin] (-3.15, -0.15) rectangle (-0.95,2.9);
\node[fill,circle, inner sep=2pt, color=black, label=left:$l$] at (-1.4,2) (j) {};
\draw[gray] (j) -- (-1.4, 2.75);

\draw[MidnightBlue] (-2.4,2) circle [radius=0.16];
\node[fill,circle, inner sep=2pt, color=MidnightBlue] at (-2.4,1) (i) {};
\draw[MidnightBlue] (i) -- (-2.4,1.85);

\draw[gray] (-2.9,0) -- (i) -- (-1.9,0);

\draw[arrows = ->] (-0.7,1.35) -- (-0.4,1.35);

\draw[thin] (-0.15,-0.15) rectangle (2.05,2.9);
\node[fill,circle, inner sep=2pt, color=black] at (1.1,2) (j) {};
\draw[gray] (j) -- (1.1, 2.75);

\draw[MidnightBlue] (1.1,2) circle [radius=0.16];
\node[fill,circle, inner sep=2pt, color=MidnightBlue] at (1.1,1) (i) {};
\draw[MidnightBlue] (i) -- (1.1,1.85);

\draw[gray] (0.6,0) -- (i) -- (1.6,0);

\end{scope}

\end{tikzpicture}



\end{center}
 \caption{{\bf The Wilson Balding operator.} The operator proposes a  sampled ancestor tree topology and node ages and may propose a tree of larger or smaller dimension (the number of nodes in the tree) than the original tree.  First, it prunes a subtree rooted at edge $e_1$ (blue edge) either from a branch, coloured black, in case a.1 or from a node, coloured black, in case a.2. Then it attaches the subtree either to an edge $e_2$ (black edge) at a random height in case b.1 or to a leaf $l$ (black node) in
case b.2. Case a.1 followed by b.2 removes a node from the tree and case a.2 followed by b.1 introduces a new node into the tree.} \label{fig:Pruning a subtree and reattaching it to an edge or a leaf}
 \label{fig: WBmove} \end{figure}
 
 \begin{figure}[H]
 \begin{center}
\includegraphics[width=6in]{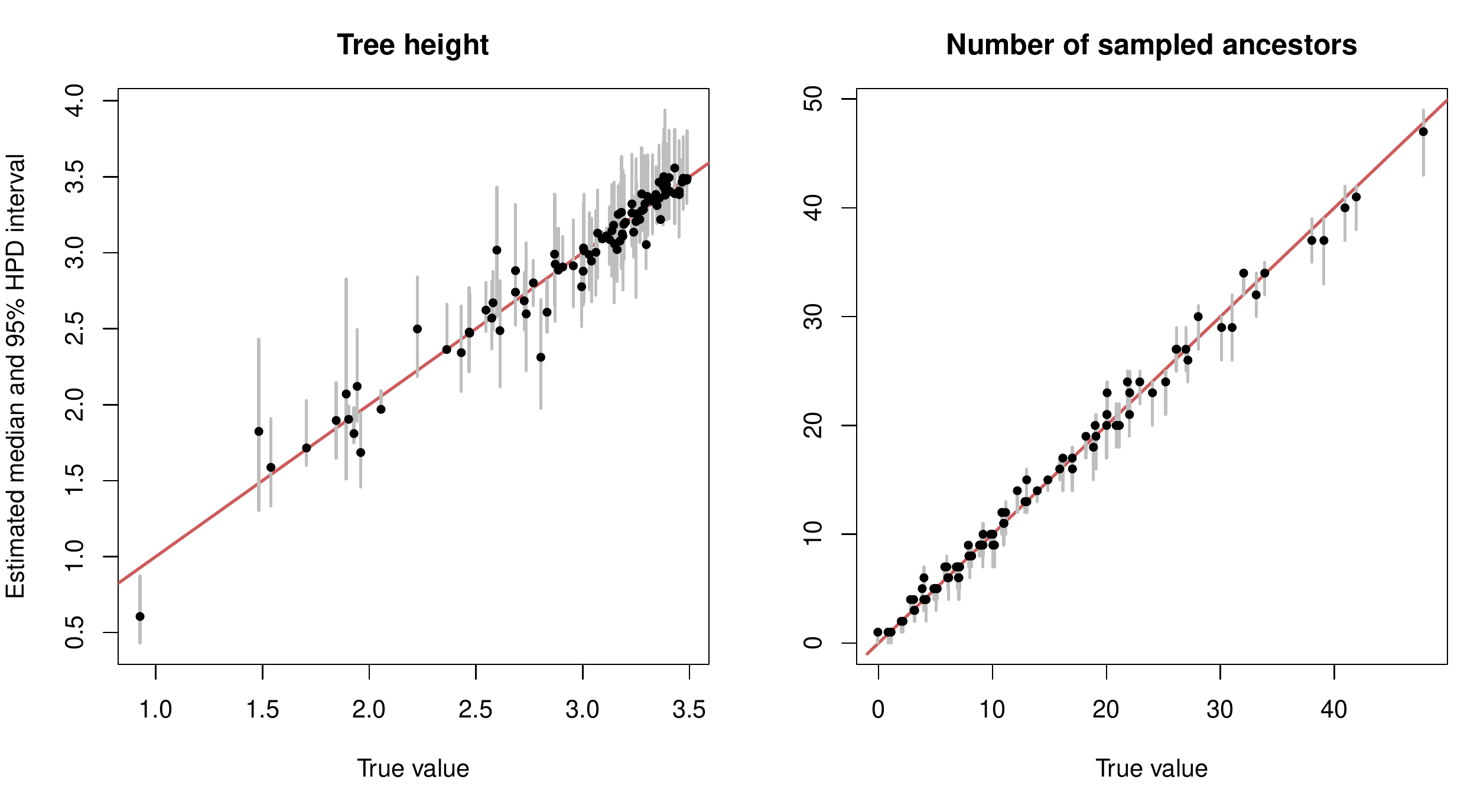}
 \end{center}
 \caption{{\bf Properties of the tree estimated from simulated data (fossilized birth-death process).} The graph shows median estimates (black dots) and 95\% HPD intervals (grey lines) against true values for the tree height (on the left) and number of sampled ancestors (on the right).}
 \label{fig: TreeChar}
 \end{figure}

\begin{figure}[H]
\begin{center} 
\includegraphics[width=3.27in]{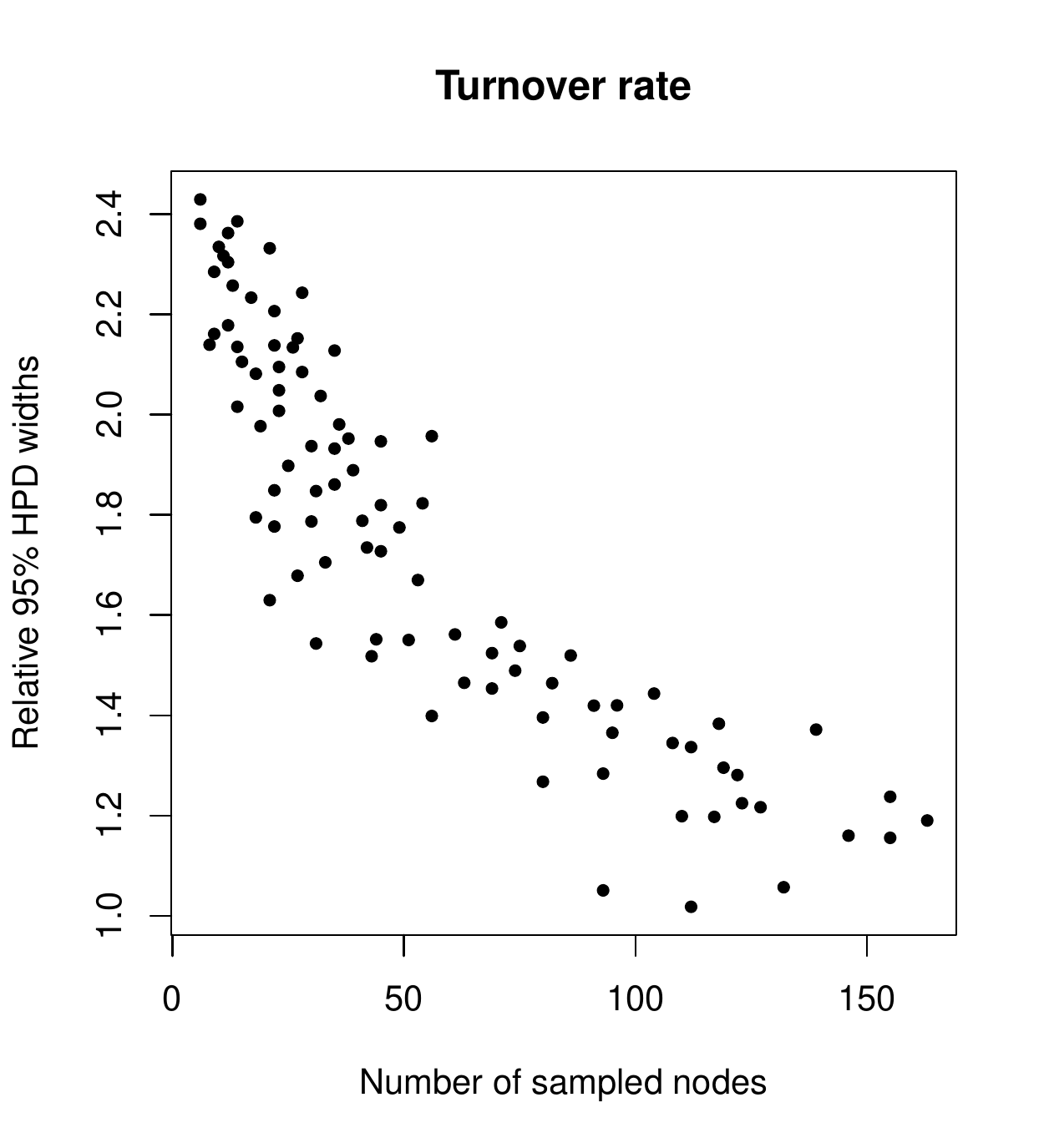}
\end{center} \caption{{\bf Uncertainty in estimates for simulated data (fossilized birth-death process).} The graph shows the widths of relative 95\%
HPD intervals of the turnover rate, $\turnover$, against tree sizes for simulated fossilized birth-death process.} 
\label{fig: TurnoverHPD} \end{figure}

 \begin{figure}[H]
 \begin{center}
 \includegraphics[width=6in]{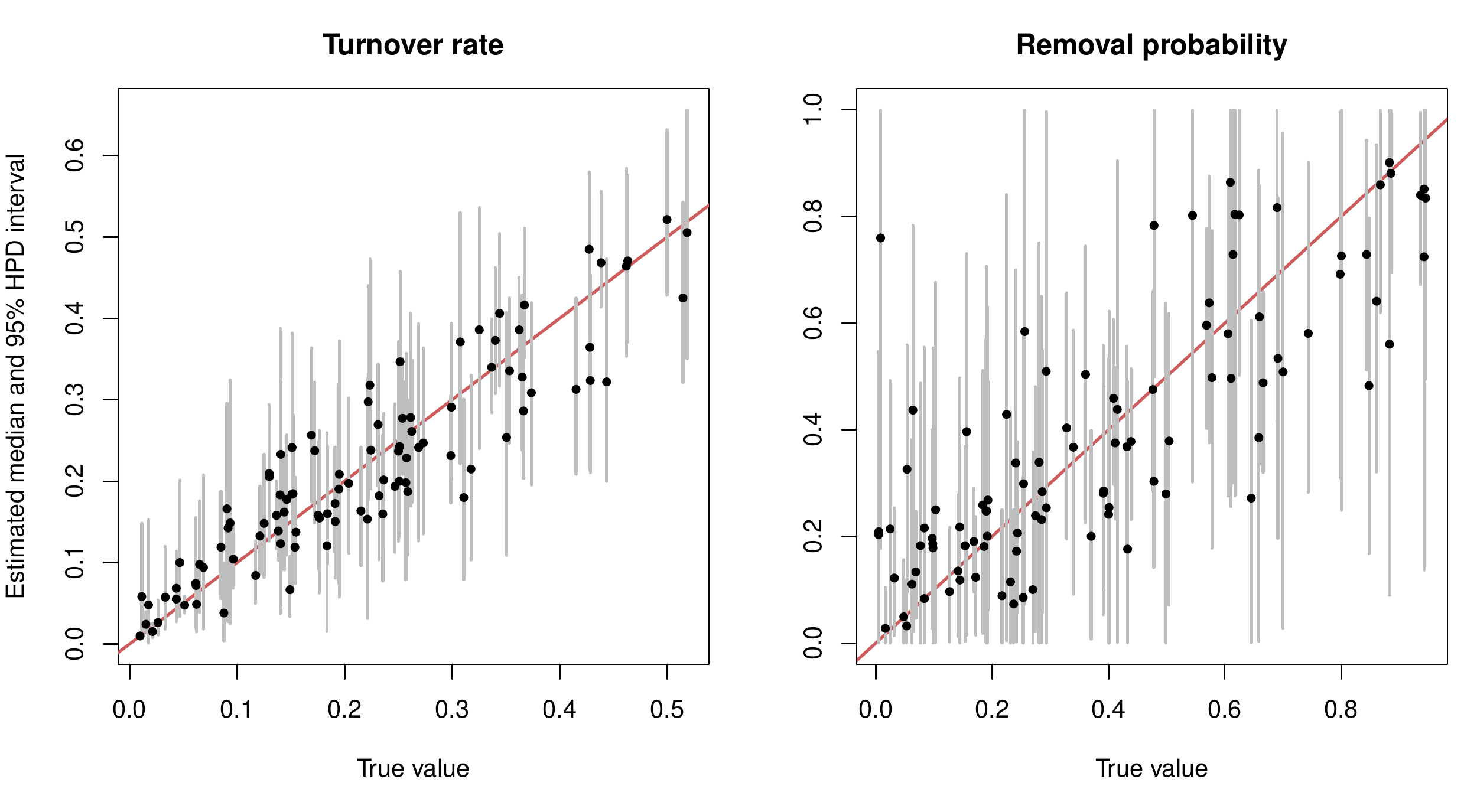}
 \end{center}
 \caption{{\bf Parameter estimates for simulated data (transmission process).} The graph shows median estimates (black dots) and 95\% HPD intervals (grey lines) against true values for the turnover rate, $\turnover$, (on the left) and removal probability, $r$, (on the right).}
 \label{fig: Parameters}
 \end{figure}

\begin{figure}[H]
 \begin{center} 
\includegraphics[width=3.27in]{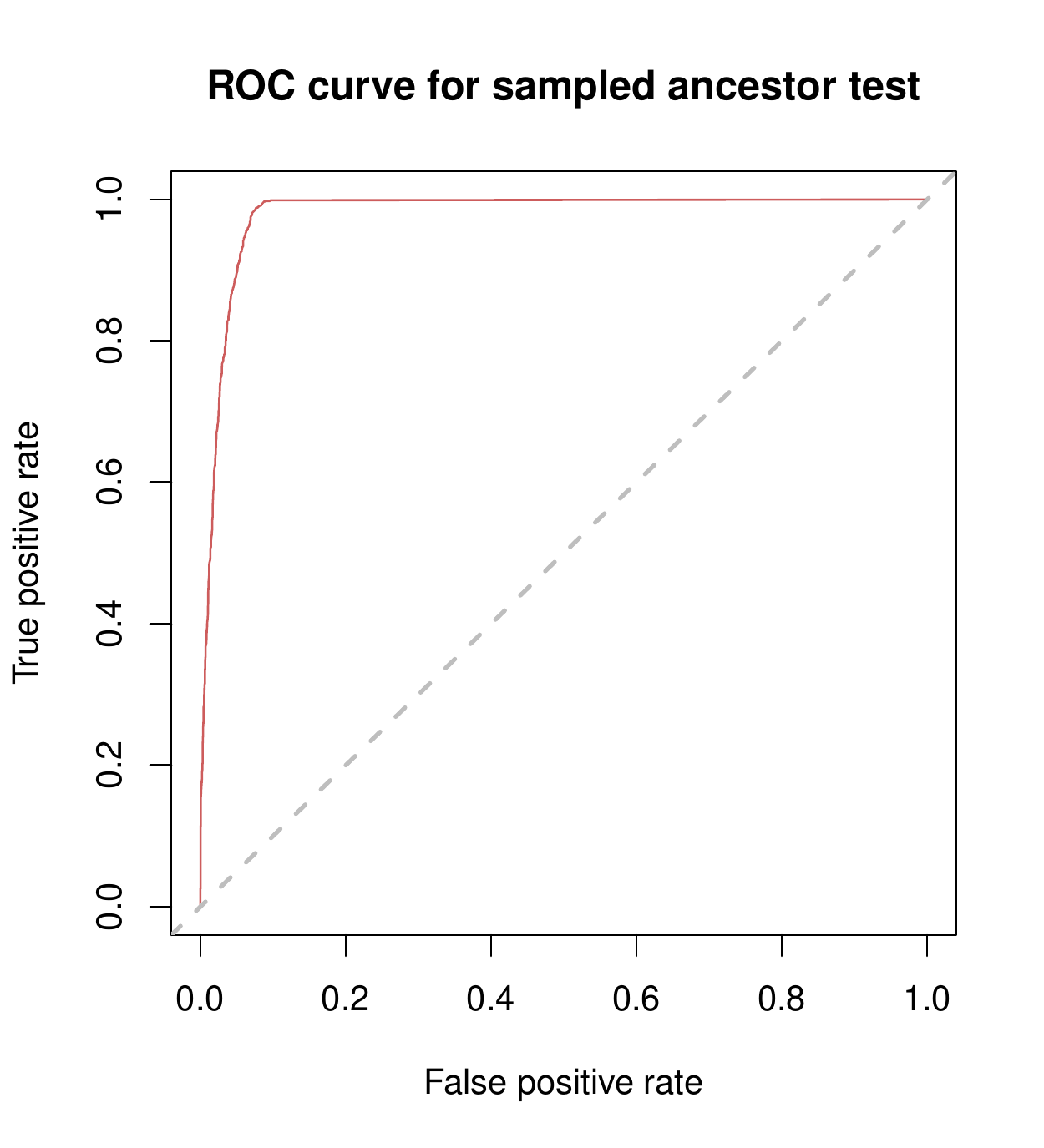}
\end{center}\caption{ {\bf ROC curve for identifying sampled ancestors based on simulated data (transmission process).}  
We identify a node as being a sampled ancestor if the posterior probability that 
the node is a sampled ancestor is greater than some threshold. The curve is parameterised 
by the threshold and shows the trade-off between true positive rate (sensitivity) and false 
positive rate (specificity) (any increase in sensitivity will be accompanied by a 
decrease in specificity) for different values of the threshold. 
The dashed diagonal line corresponds to a `random guess' test. 
The closer the ROC curve to the upper-left boarder of the ROC space (the whole area of 
the graph), the more accurate the test. The optimal value of the threshold for this curve is 0.45.}
\label{fig: Roccurve} \end{figure}

\begin{figure}[H]
 \begin{center} 
\includegraphics[width=4in]{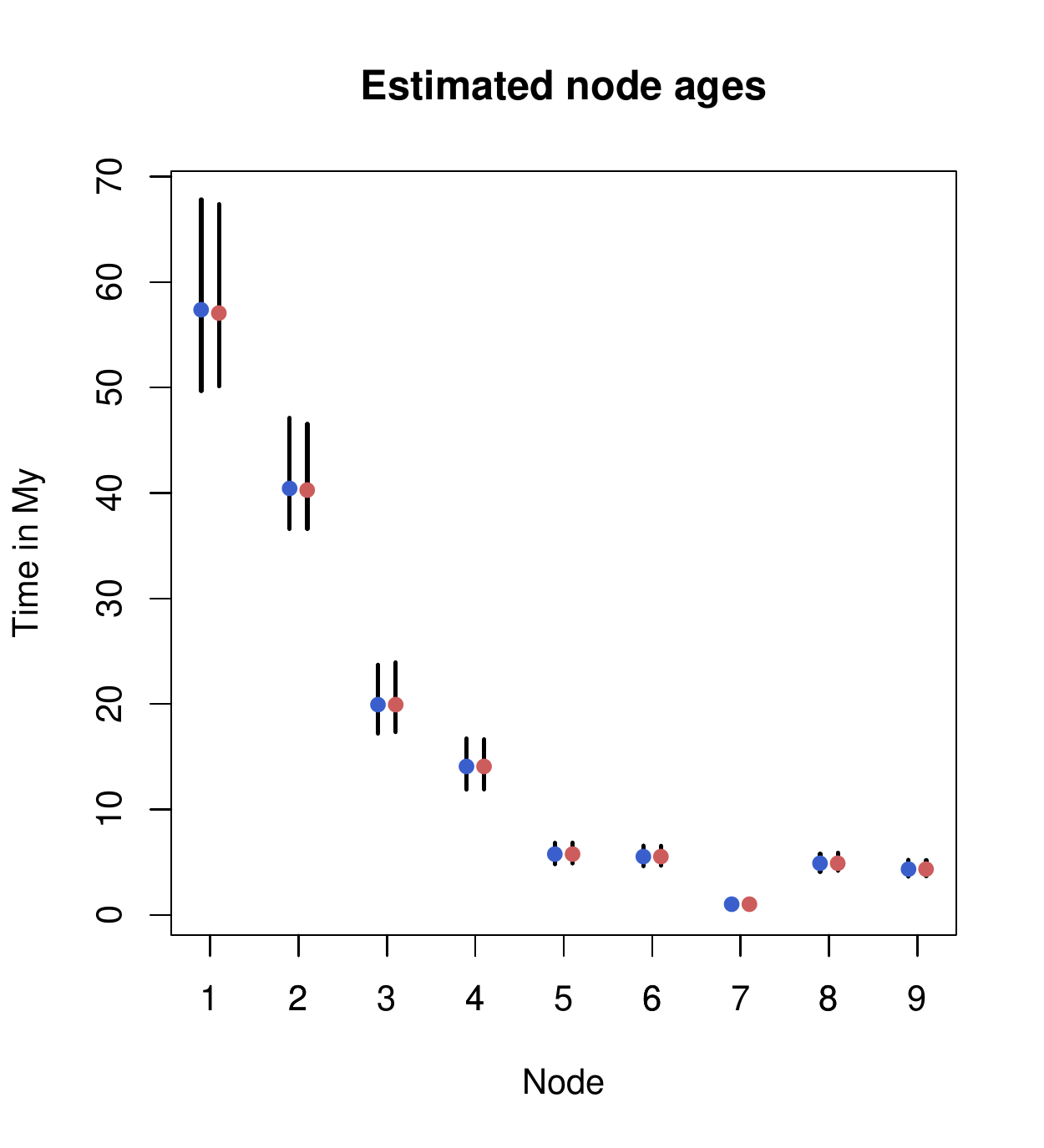}
\end{center}  \caption{ {\bf Divergence time estimates for the bear dataset.} 
The estimates are obtained from the analyses with  DPPDiv~\cite{Heath2013} 
(left bars with blue dots) and BEAST2 (right bars with red dots) implementations 
of the fossilised birth-death model, which give the same results. The bars 
 are 95\% HPD intervals and the dots are mean estimates. 
The node numbering follows the original analysis~\cite{Heath2013}: nodes 1 
and 2 represent the most recent common ancestors of
the bear clade and two outgroups (gray wolf and spotted seal). Node 3 is the
most recent common ancestor of all living bear species and nodes 4-9 are the
divergence times within the bear clade.} \label{fig: Divergence} \end{figure} 

\begin{figure}[H]
 \begin{center} 
\includegraphics[width=6in]{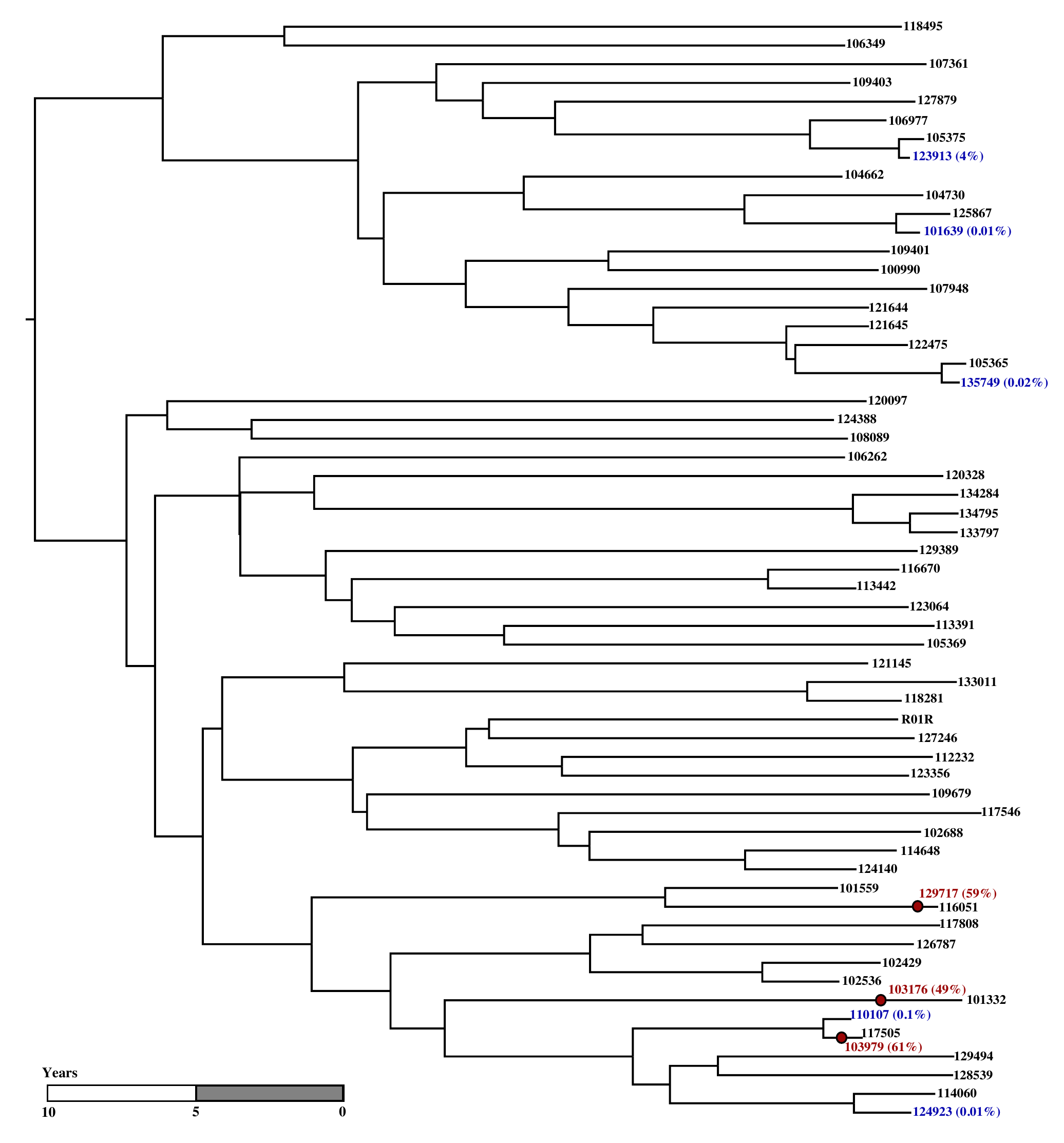}
\end{center} \caption{{\bf A tree sampled from the posterior of the HIV 1 dataset analysis. } The tree exhibits 
three estimated sampled ancestors shown as red circles. The samples with positive posterior
probabilities of being sampled ancestors are shown in colour (red for the nodes with evidence of 
being sampled ancestors and blue for other nodes with non-zero probabilities) with the posterior 
probabilities in round brackets. } \label{fig: RandomTree} \end{figure}

\section*{Tables}
%
%
\begin{table}[H] 
\caption{ \bf{Hastings ratio for the extension of the Wilson Balding operator}} 
\begin{tabular}{m{4.3cm}<{\centering}|m{2cm}<{\centering}m{2cm}<{\centering}m{2cm}<{\centering} @{}m{0pt}@{}} 
Pruning from \slash Attaching to& internal branch &  leaf  & root branch &\\ 
\hline 
internal branch &  $\frac {|I_2|}{|I_1|}$  &  $\frac  D {(D-1)} \frac 1 {|I_1|} $&  $\frac {e^{\lambda_eh_2}}{ |I_1|}$ &\\ [5ex] 
\hline 
internal node  &  $\frac D {(D+1)} |I_2|$ & $1$  &  $\frac D {(D+1)} e^{\lambda_eh_2}$ &\\ [5ex] 
\hline 
root branch &  $\frac {|I_2|}{e^{\lambda_{e}h_1}}$  &  $\frac D{(D-1)} \frac 1{e^{\lambda_eh_1}}$ &  -  &\\ [5ex] 
\end{tabular}
\begin{flushleft}The table summarises the Hastings ratio $\frac {q(g^* |
g)}{q(g|g^*)}$ for the extended Wilson Balding operator. \end{flushleft}
\label{tab:hastings} 
\end{table}

\newpage

\begin{center}
\Large{Supporting Information}
\end{center}

\section{Sampled ancestor Skyline model}

\begin{theorem} \label{ThmSky}
The probability density function for a reconstructed tree $g = (\mathcal T, \bar x, \bar y, \bar z)$ produced by the SABD skyline process with parameters $\bar \lambda, \bar \mu, \bar \psi, \bar r, \bar \rho, \bar t$ is equal to 
\begin{equation}\label{skyline}
\begin{multlined}
f[g |  \bar \lambda, \bar \mu, \bar \psi, \bar r, \bar \rho, \bar t] = \frac{2^{m+M-1}}{(m+M+k+K)!} \times \\ q_1(t_0) \prod_{i=1}^k (1-r_{\mathbf i_{z_i}})\psi_{\mathbf i_{z_i}} \prod_{i=1}^{m + M-1}\lambda_{\mathbf i_{x_i}} q_{\mathbf i_{x_i}}(x_i) \prod_{i=1}^{m}  \frac{\psi_{\mathbf i_{y_i}}(r_{\mathbf i_{y_i}} + (1-r_{\mathbf i_{y_i}})p_{\mathbf i_{y_i}}(y_i))}{q_{\mathbf i_{y_i}}(y_i)} \times \\\prod_{i=1}^\intNum ((1-\rho_i)q_{i+1}(t_i))^{n_i}\rho_i^{N_i}((1-r_{i+1})q_{i+1}(t_i))^{K_i} (r_{i+1} + (1-r_{i+1})p_{i+1}(t_i))^{M_i},
\end{multlined}
\end{equation} where, for $i = 1, \ldots, \intNum$ and $t_i \le t < t_{i-1}$,
$$p_i(t) = \frac { \lambda_i + \mu_i + \psi_i - A_i \frac {e^{A_i(t - t_i)} (1+B_i) - (1-B_i)}{e^{A_i(t - t_i)} (1+B_i) + (1-B_i)} } {2 \lambda_i}$$ with 
$$A_i = \sqrt {(\lambda_i - \mu_i - \psi_i)^2 + 4 \lambda_i \psi_i}$$ and  
$$B_i = \frac{(1-2(1-\rho_i)p_{i+1}(t_i))\lambda_i + \mu_i + \psi_i} {A_i};$$
$p_{\intNum+1}(t_\intNum) = 1$; and, for $i = 1, \ldots, \intNum$,
$$q_i(t) = \frac {4 e^{A_i(t-t_i)}} {(e^{A_i(t-t_i)}(1+B_i) + (1-B_i))^2}.$$ 
The other notation is summarised in table~\ref{tab:notation}
\end{theorem}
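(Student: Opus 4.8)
I would follow the generating-function strategy of~\cite{Stadler2010JTB,Stadler2011R0,StadKuhn12}, adapted so as to allow degree-two (sampled ancestor) nodes and piecewise-constant rates simultaneously. Two auxiliary quantities carry the argument: $p_i(t)$, the probability that a lineage alive at time $t$ (with $t_i\le t<t_{i-1}$) produces no node of the reconstructed tree at or below time $t$; and $g_e(t)$, the probability density that an edge $e$ of the reconstructed tree which exists at time $t$ gives rise to exactly the observed reconstructed subtree below the bottom node of $e$, allowing for hidden bifurcations whose other child leaves no sampled descendant. The first task is $p_i$: on $(t_i,t_{i-1})$ a standard infinitesimal argument---conditioning on the first event to affect the lineage: extinction (then no node is produced: contributes $1$), bifurcation (both children must produce no node: $p_i^2$), a $\psi$-sampling event (which always produces an observable node whether or not the individual is removed, so contributes $0$, which is why $r_i$ does not enter $p_i$), or no event---gives the Riccati equation $\dot p_i=\mu_i-(\lambda_i+\mu_i+\psi_i)p_i+\lambda_i p_i^2$, while the $\rho_i$-sampling at $t_i$ supplies the boundary conditions $p_{\intNum+1}(t_\intNum)=1$ and $p_i(t_i)=(1-\rho_i)p_{i+1}(t_i)$. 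Solving the Riccati equation interval by interval (its equilibria are the roots of $\lambda_i x^2-(\lambda_i+\mu_i+\psi_i)x+\mu_i$, whose difference is $A_i/\lambda_i$) and imposing the boundary value gives the stated closed form, with $B_i$ the integration constant fixed by $p_i(t_i)=(1-\rho_i)p_{i+1}(t_i)$, which one checks forces $A_iB_i=(1-2(1-\rho_i)p_{i+1}(t_i))\lambda_i+\mu_i+\psi_i$.

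\textbf{The density $g_e$ and the function $q_i$.} Next, holding the reconstructed subtree below $e$ fixed, a second infinitesimal argument---no extinction and no $\psi$-sampling may occur along $e$, while a hidden bifurcation on $e$ contributes the factor $2\lambda_i p_i$---shows that on $(t_i,t_{i-1})$ the density $g_e$ solves a linear first-order ODE with $p_i$ as a known coefficient, so that within interval $i$ one has $g_e(t)=g_e(b_e)\,q_i(t)/q_i(b_e)$, where $q_i$ is the solution normalised by $q_i(t_i)=1$; substituting the explicit $p_i$ and integrating reproduces $q_i(t)=4e^{A_i(t-t_i)}/(e^{A_i(t-t_i)}(1+B_i)+(1-B_i))^2$. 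The bottom node of $e$ supplies the value $g_e(b_e)$, by cases: a $\psi$-sampled tip at $y_i$ gives $\psi_{\mathbf i_{y_i}}(r_{\mathbf i_{y_i}}+(1-r_{\mathbf i_{y_i}})p_{\mathbf i_{y_i}}(y_i))$ (removed with probability $r$, or not removed but leaving no further node, probability $(1-r)p$); a $\psi$-sampled degree-two node at $z_i$ gives $(1-r_{\mathbf i_{z_i}})\psi_{\mathbf i_{z_i}}\,g_{e'}(z_i)$ for its unique child edge $e'$; a bifurcation at $x_i$ contributes the rate $\lambda_{\mathbf i_{x_i}}$ and the densities $g_{e_1}(x_i),g_{e_2}(x_i)$ of its two child subtrees; a lineage crossing $t_i$ that is not $\rho_i$-sampled contributes $(1-\rho_i)$ times $g_e$ continued into interval $i+1$ (this is the source of the $q_{i+1}(t_i)$ evaluations and the $((1-\rho_i)q_{i+1}(t_i))^{n_i}$ factors); and a $\rho_i$-sampled lineage at $t_i$ contributes $\rho_i(r_{i+1}+(1-r_{i+1})p_{i+1}(t_i))$ when it is a tip there and $\rho_i(1-r_{i+1})q_{i+1}(t_i)\,g_{e'}(t_i)$ when it is a degree-two node, yielding the remaining boundary factors with exponents $M_i$ and $K_i$.

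\textbf{Assembling the density.} Then I would peel the tree downwards from the root edge, which runs from $t_0$ to $x_1$ and contributes $q_1(t_0)\,g(x_1)$, recursively substituting the boundary values above. The $q$-factors telescope: every edge carries a $q$ at its top and a $1/q$ at its bottom, the two attached to any degree-two node cancel, and only $q_1(t_0)$, one $q_{\mathbf i_{x_i}}(x_i)$ per bifurcation and one $1/q_{\mathbf i_{y_i}}(y_i)$ per $\psi$-tip survive; in addition $q_{\intNum+1}(t_\intNum)^{n_\intNum}=1$, $K_\intNum=0$ and $p_{\intNum+1}(t_\intNum)=1$ make the interval-$(\intNum+1)$ quantities drop out as stated. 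Collecting the surviving node factors over all intervals reproduces the product on the right of~\eqref{skyline} up to the leading combinatorial constant $2^{m+M-1}/(m+M+k+K)!$; this constant (equivalently the $1/(m+M+k+K)!$ of~\eqref{4m}, once the $2^{m+M-1}$ is merged into the $m+M-1$ per-bifurcation factors to give $\prod 2\lambda_{\mathbf i_{x_i}}q_{\mathbf i_{x_i}}(x_i)$) is exactly the factor relating the space on which the recursion naturally delivers a density---oriented trees, in the sense of~\cite{Stadler2010JTB,Stadler2011R0,StadKuhn12}---to the space of labelled ranked sampled-ancestor (``FRS'') trees of~\cite{Gavr2013}.

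\textbf{An alternative route and the main obstacle.} Equivalently, one can induct on $\intNum$: peel off the first interval $[t_1,t_0)$, condition on the $\rho_1$-sampling outcomes and on the $n_1$ lineages crossing $t_1$ unsampled, use the conditional independence of the subtrees rooted at $t_1$, and invoke the $(\intNum-1)$-interval statement; the base case $\intNum=1$ is~\eqref{epid}, essentially the density of~\cite{Stadler2011R0,Stadler2010JTB} with the labelling correction. The two ODE solves and the telescoping are routine; the main obstacle is the boundary bookkeeping at each $t_i$---systematically crediting \emph{every} lineage of the reconstructed tree that crosses $t_i$ (the $M_i$ $\rho$-sampled tips, the $K_i$ $\rho$-sampled degree-two nodes, and the $n_i$ unsampled lineages), each with the correct interval-$(i+1)$ versions of $r$, $p$ and $q$---together with nailing down the oriented-to-labelled combinatorial factor in the presence of degree-two vertices, which is precisely where the FRS-tree formalism of~\cite{Gavr2013} is needed.
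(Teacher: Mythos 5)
Your proposal is correct and follows essentially the same route as the paper's Supporting Information proof: the paper likewise works on oriented trees, uses the $r$-independence of $p_i$ (citing~\cite{StadKuhn12} for its closed form rather than re-deriving the Riccati equation), sets up the same Master equation for the edge density $g_{i,e}$ with solution $g_{i,e}(t)=g_{i,e}(t_e)q_i(t)/q_i(t_e)$, assigns the same case-by-case initial values at bifurcations, $\psi$-sampled nodes, and the $t_i$ boundaries, and finishes with the identical oriented-to-labelled factor $2^{m+M-1}/(m+M+k+K)!$. The only differences are cosmetic: you derive $p_i$ and $q_i$ from first principles where the paper cites prior work, and you add an optional induction-on-$\intNum$ variant the paper does not use.
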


\begin{table} 
\begin{center}
\caption{\bf Sampled ancestor Skyline model notation}
\begin{tabular}{c|p{0.9\textwidth}}
Notation & Description \\
\hline
$\intNum$ & the number of intervals or parameter shift times,\\ 
$t_{or}$ & $t_{or} = t_0$ the time of origin, \\
$t_i$ &  is parameter shift time or $\rho$-sampling time for $i \in \{1, \ldots, \intNum \}$ with $t_\intNum = 0$, \\
$\bar t$ & $(t_0, \ldots, t_{\intNum-1})$ is a vector of those time parameters which are necessary to define the model, \\
$m$ & the number of $\psi$-sampled tips, \\
$\bar y$ & $(y_1, \ldots, y_m)$ is a vector of times of $\psi$-sampled tips, \\
$M_i$ & the number of tips sampled at time $t_i$ for $i \in \{1, \ldots, \intNum\}$, \\
$M$ & $\sum\limits_{i=1}^\intNum M_i$, \\
$\bar x$ & $(x_1, \ldots, x_{m+M})$ is a vector of bifurcation times, \\
$k$ & the number of $\psi$-sampled nodes that have sampled descendants, \\
$\bar z$ & $(z_1, \ldots, z_k)$ is a vector of times of $\psi$-sampled nodes with sampled descendants, \\
$K_i$ & the number of nodes, with sampled descendants, sampled at time $t_i$ for $i \in \{1, \ldots, m\}$\\
$K$ & $\sum\limits_{i=1}^\intNum K_i$, \\
$N_i$ & $K_i + M_i$ the total number of nodes sampled at time $t_i$ for $i \in \{1, \ldots, \intNum\}$, \\
$n_i$ & the number of lineages presented in the tree at time $t_i$ but not sampled at this time for $i~\in~\{1, \ldots, \intNum\}$, \\
${\bf i}_x$ & an index such that $t_{{\bf i}_x} \le x < t_{{\bf i}_x - 1}$.
\end{tabular}
\label{tab:notation}
\end{center}
\end{table}

\begin{proof}
First, we consider the same process but where at each bifurcation time, we label one of the new lineages as \emph{left} and another as \emph{right} and we do not label sampled nodes. In this case, the process produces oriented trees instead of labeled trees. 

The probability $p_i(t)$ that an individual alive at time $t$ has no sampled descendants when the process is stopped (i.e., in the time interval $[t_\intNum, t]$), with $t_i \le t < t_{i-1}$ ($i=1, \ldots, \intNum$) was derived in~\cite{StadKuhn12} for the birth-death skyline model without sampled ancestor (i.e. $r=1$).

Consider an event that the individual that started the process at time $t_{or}$ has no sampled descendants in the time interval $[t_\intNum, t_{or}]$. This event does not depend on the behaviour of the process if an individual was sampled (such as the possibility to remain in the process after sampling, i.e., $r<1$) because it states that no individual was sampled. That implies $p_1(t_{or}| \lambda, \mu, \psi, r) = p_1( t_{or}| \lambda, \mu, \psi)$. Since the evolution of each lineage is independent of the evolution of other coexisting lineages under this model, we can state the same for the event that an individual alive at some time $t<t_{or}$ has no sampled descendant when the process is stopped, that is, $p_i(t| \lambda, \mu, \psi, r) = p_i( t| \lambda, \mu, \psi)$ for $i$ as  above. 


For convenience, we split every edge existing at time $t_i$ (for $i = 1, \ldots, \intNum-1$) with a two degree node dated at this time. Let $g_{i,e}(t)$ be the probability density that an infected individual in the tree at time $t$ corresponding to edge $e$ (with $t_i \le t \le t_{i-1}$) evolved between $t$ and the present as observed in the tree. 

The Master equation for $g_{i, e}(t)$ along an edge with starting time $t_s$ and ending time $t_b$ ($t_e \le t \le t_b$) is
$$\frac d {dt} g_{i,e}(t) = -(\lambda_i + \mu_i + \psi_i) g_{i, e}(t) + 2 \lambda_i p_i(t) g_{i, e}(t)$$ 
Note that $r$ does not occur in the equation and will only be introduced in the initial values of $g_{i,e}$. The solution to this equation is given in~\cite{StadKuhn12}:
$$g_{i,e}(t) = g_{i,e}(t_e) \frac{q_i(t)}{q_i(t_e)}.$$ 

Further, the initial values are, for $t_e \neq t_i$,
$$g_{i, e} (t_e) = \begin{cases} 
\lambda_i g_{i, e_1}(t_e) g_{i, e_2}(t_e) &\text{ if $e$ has two descendant edges $e_1$, $e_2$,} \\
\psi_i(r_i + (1-r_i)p_i(t_e)) &\text{ if $e$ is a leaf edge, } \\
\psi_i(1-r_i)g_{i,e_1}(t_e) &\text{ if $e$ has one descendant edge $e_1$;} \\
\end{cases}
$$and for $t_e = t_i$,
$$g_{i, e} (t_e) = \begin{cases} 
(1-\rho_i)g_{i+1, e_1}(t_e) &\parbox[t]{.38\textwidth}{if $e$ has one descendant edge $e_1$ and $e$ is not a sampled node,} \\ 
\rho_i(1-r_{i+1})g_{i+1,e_1}(t_e) &\parbox[t]{.38\textwidth}{if $e$ has one descendant edge $e_1$ and $e$ is  a sampled node,}\\
\rho_i(r_{i+1} + (1-r_{i+1})p_{i+1}(t_e)) &\text{if $e$ is a leaf edge;} \\
\end{cases}
$$
 
Then the probability density of the genealogy is  
 $$f[g |  \bar \lambda, \bar \mu, \bar \psi, \bar r, \bar \rho, \bar t] =  g_{1, e_{root}}(t_0)$$
Traversing the tree from tips to the root and finding $g_{i, e}(t_e)$ for each time $t_e$ (note that $q_i(t_i)= 1$), we derive that $g_{1, e_{root}}(t_0)$ is as in~\eqref{skyline} without the first term, which comes from the fact that we considered oriented trees instead of labeled trees. 

Indeed, the expression without the first term is the probability density function for oriented trees. Having the model parameters fixed (including $\bar t$),
it only depends on branching times, sampling times of tips, sampling times of two degree nodes, the number of sampled two-degree nodes at time $t_i$, and the number of sampled tips at time $t_i$  (i.e., on $\bar x$, $\bar y$, $\bar z$, $\bar K$, and  $\bar M$), but not on how the lineages are connected, i.e. not on the particular topology. 
The density of an oriented and labeled genealogy which has the given oriented tree embedded is the oriented tree probability divided by the $(m+M+k+K)!$ possible labelings. Ignoring the $2^{m+M-1}$ orientations establishes the theorem.
\end{proof}

\subsection*{Re-parameterisation}

Let $\rho_1, \ldots, \rho_{\intNum-1} = 0$ and consider the re-parameterisation, collapsing the original $4\intNum+1$ parameters into $4\intNum$ parameters:

\begin{equation} \label{repar}
\begin{aligned}
&d_i = \lambda_i - \mu_i - \psi_i  & \text{ for  }i = 1, \ldots, \intNum  \\ 
&f_i = \lambda_i \psi_i  & \text{ for }i = 1, \ldots, \intNum \\
& g_i = (1-r_i) \psi_i & \text{ for } i  = 1, \ldots, \intNum \\ 
& h = \rho_\intNum \lambda_\intNum & \\
& k_i = \frac {\lambda_i}{\lambda_{i+1}} & \text{ for } i = 1, \ldots, \intNum - 1
\end{aligned}
\end{equation}

We will show in the following that the tree likelihood derived in Theorem \ref{ThmSky} conditioned on $\psi$-sampling at least one individual and with $\rho_1, \ldots, \rho_{\intNum -1} = 0$ depends only on the $4\intNum$ parameters obtained from the re-parameterisation, thus in the original parameter set of size $4\intNum+1$, one parameter cannot be identified from the sampled tree.

\begin{lemma}\label{lem1}
Let $\zeta_i(t) = A_i \frac {e^{A_i(t - t_i)} (1+B_i) - (1-B_i)}{e^{A_i(t - t_i)} (1+B_i) + (1-B_i)}$. Then 
\[
\begin{aligned}
&A_i = \sqrt{d_i^2 + 4f_i} & \text{ for } i &=1,\ldots, \intNum; \\
&B_i = \frac{k_i (d_{i+1} + \zeta_{i+1}(t_i)) - d_i}{A_{i+1}} & \text{ for } i &=1,\ldots, \intNum-1;  \\
&B_\intNum = \frac {2h - d_\intNum}{A_\intNum}; &&\\
&r_i + (1-r_i)p_i(t)  =  \frac{g_i}{2 f_i} \Bigl(\frac{2 f_i}{g_i} - d_i - \zeta_i(t) \Bigr)& \text{ for } i&=1,\ldots, \intNum;  \text{ and} \\
&1- p_{i+1}(t) = \frac {d_{i+1} + \zeta_{i+1}(t)} {2 \lambda_{i+1}}  &\text{ for } i&=0,\ldots, \intNum-1.
\end{aligned}
\]
\end{lemma}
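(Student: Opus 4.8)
The plan is to verify the five identities one at a time by direct substitution of the reparameterisation~\eqref{repar} into the definitions of $A_i$, $B_i$, $p_i$ and $\zeta_i$ recorded in Theorem~\ref{ThmSky}. No idea beyond careful bookkeeping is required: the identities are genuinely pointwise in $t$ and in $i$, so there is no need to unravel the recursion linking $\zeta_{i+1}$, $B_{i+1}$ and $A_{i+1}$.

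The identity for $A_i$ is immediate: expand $(\lambda_i-\mu_i-\psi_i)^2 + 4\lambda_i\psi_i$ and read off $\lambda_i-\mu_i-\psi_i = d_i$ and $\lambda_i\psi_i = f_i$. The two ``$1-p$''-type identities (the fourth and fifth displayed equations) both rest on one short computation: recognising $\zeta_i(t)$ as exactly the $A_i$-multiple appearing in the definition of $p_i$ lets us write $p_i(t) = \frac{\lambda_i+\mu_i+\psi_i-\zeta_i(t)}{2\lambda_i}$, and then $1 - p_i(t) = \frac{2\lambda_i - (\lambda_i+\mu_i+\psi_i) + \zeta_i(t)}{2\lambda_i} = \frac{d_i + \zeta_i(t)}{2\lambda_i}$. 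The fifth identity is this statement with the index advanced by one, and since the index runs only over $1,\ldots,\intNum$ the boundary convention $p_{\intNum+1}(t_\intNum)=1$ plays no role. For the fourth identity I would then write $r_i + (1-r_i)p_i(t) = 1 - (1-r_i)\bigl(1 - p_i(t)\bigr)$, substitute $1-p_i(t) = \frac{d_i+\zeta_i(t)}{2\lambda_i}$, and rewrite the prefactor as $\frac{1-r_i}{2\lambda_i} = \frac{(1-r_i)\psi_i}{2\lambda_i\psi_i} = \frac{g_i}{2f_i}$; expanding $\frac{g_i}{2f_i}\bigl(\frac{2f_i}{g_i} - d_i - \zeta_i(t)\bigr) = 1 - \frac{g_i}{2f_i}(d_i+\zeta_i(t))$ then matches.

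It remains to treat the two $B_i$ identities. For $i=\intNum$ I would substitute $p_{\intNum+1}(t_\intNum)=1$ into the definition of $B_\intNum$, so its numerator collapses to $(2\rho_\intNum-1)\lambda_\intNum + \mu_\intNum + \psi_\intNum = 2\rho_\intNum\lambda_\intNum - (\lambda_\intNum - \mu_\intNum - \psi_\intNum) = 2h - d_\intNum$. For $1 \le i \le \intNum-1$ I would use $\rho_i = 0$, so that the numerator of $B_i$ is $(1 - 2p_{i+1}(t_i))\lambda_i + \mu_i + \psi_i$; substituting $p_{i+1}(t_i) = \frac{\lambda_{i+1}+\mu_{i+1}+\psi_{i+1}-\zeta_{i+1}(t_i)}{2\lambda_{i+1}}$, replacing $\mu_{i+1}+\psi_{i+1}$ by $\lambda_{i+1}-d_{i+1}$ and $\mu_i+\psi_i$ by $\lambda_i-d_i$, and finally using $\frac{\lambda_i}{\lambda_{i+1}} = k_i$ (the spurious $\lambda_i$ terms cancelling against $k_i\lambda_{i+1} = \lambda_i$) leaves the numerator equal to $k_i\bigl(d_{i+1}+\zeta_{i+1}(t_i)\bigr) - d_i$.

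The only real point of caution — and the closest thing to an obstacle — is keeping the $i$ and $i+1$ indices straight through that last cancellation. I would also flag a small inconsistency in the statement: since the definition of $B_i$ in Theorem~\ref{ThmSky} carries $A_i$ in the denominator, the computation above closes with $A_i$, not the $A_{i+1}$ printed in the second displayed identity, so the $A_{i+1}$ there should read $A_i$. Everything else is routine algebra.
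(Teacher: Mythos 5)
Your proof is correct and follows essentially the same route as the paper's: direct substitution of the reparameterisation into the definitions of $A_i$, $B_i$ and $p_i$ (the paper writes out only the $r_i+(1-r_i)p_i(t)$ case and dismisses the rest as routine substitution, which you carry out in full). Your flag that the denominator in the second identity should be $A_i$ rather than the printed $A_{i+1}$ is also well taken, since Theorem~\ref{ThmSky} defines $B_i$ with $A_i$ in the denominator and your cancellation of the numerator closes with that $A_i$.
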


\begin{proof}
We show only one case:
$$r_i + (1-r_i)p_i(t) = r_i + (1-r_i) \frac { \lambda_i + \mu_i + \psi_i - \zeta_i(t)} {2 \lambda_i} = $$ $$ \frac{(1-r_i)}{2 \lambda_i} \Bigl(\frac{2 \lambda_i r_i}{(1-r_i)} + \lambda_i + \mu_i + \psi_i - \zeta_i(t)\Bigr) =$$ 
$$\frac{g_i}{2 f_i} \Bigl(\frac{2 \lambda_i}{(1-r_i)} - 2\lambda_i + \lambda_i + \mu_i + \psi_i - \zeta_i(t) \Bigr) = \frac{g_i}{2 f_i} \Bigl(\frac{2 f_i}{g_i} - d_i - \zeta_i(t) \Bigr)$$
Other equations can be verified by substituting parameters $\bar d$, $\bar f$, $h$, and $\bar k$ with expressions given in~\eqref{repar}. 

\end{proof}

\begin{theorem}

When $\rho_1, \ldots, \rho_{\intNum-1} = 0$, the tree density function for the sampled ancestor model conditioned on $\psi$-sampling at least one individual, which is $$f[g |\bar \lambda, \bar \mu, \bar \psi, \bar r, \rho_\intNum, \bar t, S] \propto$$ $$\rho_\intNum^{N_\intNum} \frac {q_1(t_0)}{1-p_1(t_0)}  \prod_{i=1}^k (1-r_{\mathbf i_{z_i}})\psi_{\mathbf i_{z_i}} \prod_{i=1}^{m + N_\intNum-1}2 \lambda_{\mathbf i_{x_i}} q_{\mathbf i_{x_i}}(x_i) \prod_{i=1}^{m}  \frac{\psi_{\mathbf i_{y_i}}(r_{\mathbf i_{y_i}} + (1-r_{\mathbf i_{y_i}})p_{\mathbf i_{y_i}}(y_i))}{q_{\mathbf i_{y_i}}(y_i)} \prod_{i=1}^\intNum (q_{i+1}(t_i))^{n_i}$$
can be re-parameterised with parameters given in Equations~\eqref{repar}.  
\end{theorem}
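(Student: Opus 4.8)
The plan is to substitute the identities from Lemma~\ref{lem1} into the tree density and check that every surviving factor is expressible in the re-parameterised quantities $\bar d$, $\bar f$, $\bar g$, $h$, $\bar k$. The factors of the density group naturally into four kinds: (i) the product $\prod_{i=1}^k (1-r_{\mathbf i_{z_i}})\psi_{\mathbf i_{z_i}}$ over sampled ancestors; (ii) the bifurcation product $\prod_{i=1}^{m+N_\intNum-1} 2\lambda_{\mathbf i_{x_i}} q_{\mathbf i_{x_i}}(x_i)$; (iii) the tip product $\prod_{i=1}^{m}\psi_{\mathbf i_{y_i}}(r_{\mathbf i_{y_i}} + (1-r_{\mathbf i_{y_i}})p_{\mathbf i_{y_i}}(y_i))/q_{\mathbf i_{y_i}}(y_i)$; and (iv) the interval product $\prod_{i=1}^\intNum (q_{i+1}(t_i))^{n_i}$, together with the conditioning factor $q_1(t_0)/(1-p_1(t_0))$ and the prefactor $\rho_\intNum^{N_\intNum}$. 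I will treat each in turn, since the $r_i = 0$ reduction of Equation~\eqref{epidSky} that defines the $\psi$-sampling skyline density is already an instance of the general formula with $\rho_1,\dots,\rho_{\intNum-1}=0$, so the issue is purely one of re-expressing the parameters.

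The first observation is that $A_i = \sqrt{d_i^2 + 4f_i}$ depends only on $(\bar d,\bar f)$, and $B_i$ depends only on $(\bar d,\bar f,\bar k)$ for $i<\intNum$ and on $(\bar d,\bar f,h)$ for $i=\intNum$ — this is exactly the content of Lemma~\ref{lem1}, proved recursively since $B_i$ involves $\zeta_{i+1}(t_i)$ which in turn is built from $A_{i+1}, B_{i+1}$. Consequently $q_i(t) = 4e^{A_i(t-t_i)}/(e^{A_i(t-t_i)}(1+B_i)+(1-B_i))^2$ depends only on the re-parameterised quantities; this immediately handles factor (iv), the $q_1(t_0)$ numerator, and all the $q$ terms appearing elsewhere, and it handles $1-p_1(t_0) = (d_1 + \zeta_1(t_0))/(2\lambda_1)$ up to the leftover $\lambda_1$. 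For factor (i), note $(1-r_i)\psi_i = g_i$ by definition, so this product is manifestly a function of $\bar g$ alone. For factor (iii), Lemma~\ref{lem1} gives $r_i + (1-r_i)p_i(t) = \frac{g_i}{2f_i}\bigl(\frac{2f_i}{g_i} - d_i - \zeta_i(t)\bigr)$, and dividing by $q_i$ keeps this in the allowed parameters; the remaining $\psi_i$ in the numerator must be cancelled against something else.

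The crux — and the step I expect to be the main obstacle — is the careful bookkeeping of the leftover $\lambda$'s and $\psi$'s that are not individually in the re-parameterised set: each bifurcation contributes a bare $2\lambda_{\mathbf i_{x_i}}$, each tip contributes a bare $\psi_{\mathbf i_{y_i}}$, the prefactor carries $\rho_\intNum^{N_\intNum}$, and the conditioning denominator carries a $\lambda_1$. The plan is to count, interval by interval, how many bifurcation nodes, tip nodes, sampled-ancestor nodes and $\rho_\intNum$-samples fall in each interval, using the topological identities among $m$, $M=N_\intNum$, $k$, $n_i$ and the interval counts (e.g. that the number of lineages entering interval $i$ from below equals $n_i + N_i$ and relates to the bifurcations and tips within the interval). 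One then re-expresses $\lambda_i = f_i \lambda_{i+1}/\lambda_1 \cdot (\text{telescoping } k\text{-ratios})$; more cleanly, writing $\lambda_i = \lambda_\intNum \prod_{j=i}^{\intNum-1} k_j$ and $\lambda_\intNum = h/\rho_\intNum$, every bare $\lambda_i$ becomes $h/\rho_\intNum$ times a product of $k_j$'s. The leftover $\psi_i = f_i/\lambda_i$ similarly becomes $f_i\rho_\intNum/h$ times $k$-ratios. The claim will then be that the net power of $\rho_\intNum$ coming from all bare $\psi$'s (there are $m$ of them, one per $\psi$-sampled tip, and also the $k$ sampled ancestors already absorbed into $g$) exactly balances the $\rho_\intNum^{N_\intNum}$ prefactor together with the $\lambda_1^{-1}$ in the conditioning factor and whatever $\lambda$'s survive in the bifurcation product — so that $\rho_\intNum$ disappears. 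Since by hypothesis $r_i \neq 0$ is allowed here (unlike the fossilized case), one does not get the further collapse, but one does lose exactly the single degree of freedom: the $4\intNum+1$ original parameters $(\bar\lambda,\bar\mu,\bar\psi,\bar r,\rho_\intNum)$ map onto the $4\intNum$ re-parameterised ones, and I will close the argument by exhibiting the inverse map on the re-parameterised coordinates (given any one of, say, $\lambda_\intNum$, solve $h$ for $\rho_\intNum$, $g_i$ for $r_i$, $f_i$ for $\psi_i$, $d_i$ for $\mu_i$, $k_i$ for the remaining $\lambda$'s), confirming the density is constant along the one-dimensional fibre and hence that parameter is non-identifiable.
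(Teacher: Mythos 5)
Your proposal is correct and follows essentially the same route as the paper: invoke Lemma~\ref{lem1} to express $q_i$, $\lambda_1(1-p_1(t_0))$ and $r_i+(1-r_i)p_i$ in the new parameters, then dispose of the leftover bare $\lambda$'s, $\psi$'s and $\rho_\intNum^{N_\intNum}$ by the counting identity $\#\lambda = m+N_\intNum = \#\psi + N_\intNum$ (the paper phrases this as pairing each $\lambda_j$ with a $\psi_i$ or a $\rho_\intNum$ to form $f$'s, $h$'s and $k$-ratios; your version via $\lambda_i = (h/\rho_\intNum)\prod_j k_j$ and cancellation of the net $\rho_\intNum$ exponent is the same argument). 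The interval-by-interval bookkeeping you anticipate is unnecessary --- the global count suffices, as your own cleaner formulation already shows.
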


\begin{proof}
We can write this function as follows:

$$\rho_\intNum^{N_\intNum}  \lambda_1 \prod_{i=1}^{m + N_\intNum-1} \lambda_{\mathbf i_{x_i}} \prod_{i=1}^{m} \psi_{\mathbf i_{y_i}}  \times $$ $$ \frac {q_1(t_0)}{\lambda_1(1-p_1(t_0))} \prod_{i=1}^k (1-r_{\mathbf i_{z_i}})\psi_{\mathbf i_{z_i}} \prod_{i=1}^{m + N_\intNum-1}2 q_{\mathbf i_{x_i}}(x_i) \prod_{i=1}^{m}  \frac{(r_{\mathbf i_{y_i}} + (1-r_{\mathbf i_{y_i}})p_{\mathbf i_{y_i}}(y_i))}{q_{\mathbf i_{y_i}}(y_i)}  \prod_{i=1}^\intNum (q_{i+1}(t_i))^{n_i}$$

From lemma~\ref{lem1}, it follows that $q_i(t)$, $\lambda_1(1-p_1(t_0))$, $r_i + (1-r_i)p_i(t)$ depend only on parameters $\bar d$, $\bar f$, $\bar g$, $h$ and $\bar k$ and do not depend on $\bar \lambda$, $\bar \mu$, $\bar \psi$, $\bar r$, $\rho_\intNum$ individually. It remains to show that $$\rho_\intNum^{N_\intNum}  \lambda_1 \prod_{i=1}^{m + N_\intNum-1} \lambda_{\mathbf i_{x_i}} \prod_{i=1}^{m} \psi_{\mathbf i_{y_i}}$$ also depends only on $\bar d$, $\bar f$, $\bar g$, $h$ and $\bar k$. 

Note that 
\[
\begin{aligned} 
& \psi_i \lambda_j  = f_i  k_i \ldots k_i & \text{ for } j < i, \\
& \rho_\intNum \lambda_i = h  k_i \ldots k_\intNum &\text{ for } j < \intNum. 
\end{aligned}
\]
and we can decompose the last term in $m+N_\intNum$ terms in either of the two forms: $\psi_i \lambda_j$ and $\psi_\intNum \lambda_i$.  
\end{proof} 

Setting $\rho_\intNum$ to zero in re-parameterisation~\eqref{repar}, we can see that function~\eqref{4m} depends on $4\intNum -1$ parameters: $\bar d$, $\bar f$, $\bar g$ and $k_1, \ldots, k_{\intNum-1}$, out of $4\intNum$ parameters: $\bar \lambda$, $\bar \mu$, $\bar \psi$ and $\bar r$. 
Also, setting $\intNum=1$, $\lambda_1 = \lambda$, $\mu_1=\mu$, $\psi_1 = \psi$, and $r_1=r$ gives us that $p_1(t) = p_0(t)$ and  $q_1(t) = q(t)$ and that $f[g | \lambda, \mu, \psi, r, t_0, S]$ is basically the same function as in~\eqref{epid}. That means that we can  re-parameterise function~\eqref{epid} with $\lambda - \mu - \psi$, $\lambda \psi$ and $\psi(1-r)$. 

In a similar manner, we can show that when $\bar r=0$, $\rho_\intNum \neq 0$ and conditioning on sampling at least one extant individual (i.e., considering skyline fossilised birth-death process with tree probability density~\eqref{epidSky}), function $\lambda_1(1-\hat p_1(t)) = \lambda_1(1 - p_1(t|\bar \psi=0))$ depends on 
\begin{equation*} 
\begin{aligned}
&\hat d_i = \lambda_i - \mu_i & \text{ for  }i &= 1, \ldots, \intNum;  \\ 
& k_i = \frac {\lambda_i}{\lambda_{i+1}} & \text{ for } i &= 1, \ldots, \intNum - 1; \text{ and} \\
&h = \rho_\intNum \lambda_\intNum. &  &
\end{aligned}
\end{equation*}
Note that $\hat d_i = d_i - g_i$ because $r_i = 0$ implying $g_i = \psi_i$ for all $i$.  That means we can re-parameterise \eqref{epidSky} with~\eqref{repar}. But for this model, we have $3\intNum + 1$ initial parameters: $\bar \lambda$, $\bar \mu$, $\bar \psi$, and $\rho_l$, and $4 \intNum$ new parameters: $\bar d$, $\bar f$, $\bar g$, $k_1, \ldots, k_{\intNum-1}$ and $h$; implying that re-parameterisation~\eqref{repar} does not reduce the number of parameters in function~\eqref{epidSky}. 

\section{Testing Operators} 

We introduced a number of operators for a random walk in the space of sampled ancestor trees and implemented the operators as a sampled ancestor add-on to the BEAST2 software. To test the implementation we run the MCMC sampler to obtain a sample from the tree distribution defined by the sampled ancestor birth-death model~\cite{Stadler2011R0}
and compare the results with calculations made in Mathematica software.

The probability density of the tree distribution is
$$f[g | \lambda, \mu, \psi, r, t_{or}]  = \frac 1{(k+m)!} q(t_{or}) (\psi(1-r))^k \prod_{i=1}^{m-1} 2 \lambda q(x_i) \prod_{i=1}^m \frac{\psi(r + (1-r)p_0(y_i))}{q(y_i)}
$$ 
We fix sample size $n = k+m$, sampling dates $\bar y$ and all the model parameters except for the time of origin $t_{or}$ placing a uniform distribution on it. So we sample genealogies from the distribution with probability density:
\begin{equation}
\label{density}
f[g, t_{or}| \lambda, \mu, \psi, r; n, \bar y] = f[g| \lambda, \mu, \psi, r, t_{or}; n, \bar y]  f_{or}(t_{or}) 
\end{equation}
where $f_{or}(x)$ is a probability density of the origin. We set,  
\begin{center}
\begin{tabular}{cccc}\label{parameters}
$\lambda = 2$ & $\mu =1$ & $\psi = 0.5$ & $r = 0.9$ \\
\multicolumn{4}{c}{$t_{or} \sim$ Uniform(0, 1000)} \\
$n= 3$ & $y_1 = 2$ & $y_2 = 1$ & $y_3 = 0$ \\
\end{tabular}
\end{center} We run 100 MCMC analysis to test different operators that do not change sampled node's times.

To assess whether the obtained tree samples are from distribution~\eqref{density} with the given parameters we calculate the true marginal probabilities for all non-ranked tree topologies on three sequentially sampled individuals in Mathematica. There are eight different non-ranked tree topologies. Denote them $T_1, \ldots, T_8$. To fix string representations for the topologies we label the individual sampled at time $y_1$ as  $1$,  at time $y_2$ as  $2$, and at time $y_3$ as  $3$. The true probabilities are shown in the table below: 

\begin{center}
\begin{tabular}{p{3cm}<\centering p{3cm}<\centering  p{2cm}<\centering}
Non-ranked tree topology  & String representation & Probability in \% \\
\hline
$T_1$ & $((3,2),1)$ & $77.8327$\\
$T_2$ & $((3,2))1$ & $7.8642$ \\
$T_3$ & $((3)2,1)$ & $3.8657$ \\
$T_4$ & $(3,(2,1))$ & $4.3189$ \\
$T_5$ & $((3,1),2)$ & $4.3189$ \\
$T_6$ & $((3)2)1$ & $0.4135$ \\
$T_7$ & $(3, (2)1)$ & $0.6930$ \\
$T_8$ & $((3)1, 2)$ & $0.6930$ \\
\hline 
\end{tabular}
\end{center}
 
Further we compare the estimated marginal probabilities with the true marginal probabilities. We calculate the standard errors of the estimated probabilities for each tree topology and assess if the estimated value is within two standard errors of the true value. To calculate a standard error given by:

$$\frac{p_{true}(1-p_{true})} {\sqrt {ESS}}$$ we need to find the number of independent samples, i.e. the effective sample size (ESS). To find the ESS we assign an integer to each of the eight topologies to obtain a sample from $\{1, \ldots, 8\}$ instead of a tree sample and calculate the ESS for the integer sample. 

The obtained results for 100 runs are summarised in the following table:

\begin{center}
\begin{tabular}{p{0.25\textwidth}|p{0.05\textwidth}<\centering p{0.05\textwidth}<\centering p{0.05\textwidth}<\centering p{0.05\textwidth}<\centering p{0.05\textwidth}<\centering p{0.05\textwidth}<\centering p{0.05\textwidth}<\centering  p{0.05\textwidth}<\centering }

\multirow{2}{*}{Operators \footnotemark[1]} & \multicolumn{8}{c}{Accuracy for non-ranked tree topologies (in \%)} \\
\cline{2-9}
 & $T_1$ & $T_2$ & $T_3$ & $T_4$ & $T_5$ & $T_6$ & $T_7$ & $T_8$ \\
\hline
WB  & 96 & 96 & 99 & 95 & 96 & 98 & 95 & 95 \\
\hline
WB & 98 & 95 & 94 &  92 &  93 & 96 & 96 & 98 \\
\hline
WB and S & 97 & 95 & 97 & 98 & 97 & 98 & 99 & 97 \\
\hline
WB, LSJ, and Ex &  92 & 97 & 97 & 99 & 98 & 98 & 95 & 96 \\
\hline
WB, LSJ, Ex, and S & 95 &  94 & 96 & 98 &  93 &  92 & 99 & 98 \\
\hline 
WB, LSJ, Ex, U, and S &  94 &  93 & 91 & 97 &  93 &  92 &  94 & 95 \\
\hline 
\end{tabular}
\end{center}

\footnotetext[1]{we use abbreviations: WB for the extension of Wilson Balding, LSJ for Leaf--sampled-ancestor jump, Ex for Exchange (assuming a combination of the narrow and wide versions), S for Scale, and U for Uniform.}

\section{Simulation studies}

We divide the simulations in two groups. In one group of simulations (Scenario 1), we simulate trees and then estimate tree model parameters with the trees fixed in MCMC. In the second type of simulations (Scenario 2), we simulate trees and sequences along the simulated trees and then run MCMC with sequences and sampled node dates as the input data to estimate tree model parameters, trees, and molecular model parameters. 

We simulate 100 trees in all the scenarios except for the last scenario of the skyline model simulations. 
When simulating trees, we either fix a set of tree model parameters and simulate each tree under the model with the fixed parameters or draw a new set of parameters from prior distributions for each tree. Further we either simulate the process until it reaches a pre-defined number of sampled nodes or until a pre-defined time length (the time of origin) is reached. 

For models without $\rho$-sampling, we fix one of the parameters to its true value in MCMC (as not all parameters can be inferred). In all scenarios, we place a uniform prior on $[0, 1000]$ for the time of origin.

We simulate sequences of 2000 bp under the GTR model with fixed rates and frequencies: $$( \eta_{AC}, \eta_{AG}, \eta_{AT}, \eta_{CG}, \eta_{CT}, \eta_{GT}) =  (0.4, 1.0, 0.1, 0.15, 1.04, 0.15)$$  $$(\pi_A, \pi_C, \pi_G, \pi_T) = (0.25, 0.25, 0.25, 0.25)$$ We use the strict molecular clock model with a fixed substitution rate.

For each estimated parameter, we take the median of its posterior distribution as a point estimate. We calculate the error and relative bias of the median estimate and relative 95\% high probability density (HPD) interval width. And we assess whether the true value is inside the 95\% HPD interval.

$$error = \frac {|true~value - median|} {true~value}$$

$$relative~bias =  \frac {true~value - median} {true~value}$$

$$relative~95\%HPD~width =  \frac {upper- lower} {true~value}$$
where $upper$ and $lower$ are the upper and lower bounds of the 95\% HPD interval.   

Then we summarise the statistics from 100 runs and report medians of 100 errors, 100 relative biases, and 100 relative 95\% HPD width. We also report the 95\% HPD accuracy, which is the number of times when the true value was inside the 95\% HPD interval. 

\subsection*{Simulation of the sampled ancestor birth-death model}

\subsubsection*{Scenario 1}
In scenarios 1.1, 1.2, and 1.4, we simulate under the model without $\rho$-sampling, i.e., $\rho=0$. In Scenarios 1.1 and 1.2, parameters are fixed and we stop simulations when a sample of 200 is reached. In Scenario 1.3, we simulate trees on a fixed time interval of $t_{or}$. We discard trees with too small or too large numbers of sampled nodes.  In Scenario 1.4, we draw parameters  $\lambda$, $\mu$, $\psi$, and $r$ from uniform prior distributions and simulate trees with 100 sampled nodes.

\begin{center}
\begin{tabular}{p{1.7cm}|p{0.9cm}<{\centering}cccp{1.3cm}<{\centering}p{1.9cm}<{\centering}p{2.2cm}<{\centering}}
& true value & prior & median & error & relative bias & {\footnotesize relative 95\% HPD width} & {\footnotesize 95\% HPD accuracy in \%} \\
\hline 
\multicolumn{8}{l}{Scenario 1.1: $\psi$ fixed in MCMC.} \\

\hline 
$\lambda$ & 0.9  & U(0,100)  & 0.9222  & 0.0735 & 0.0247 & 0.4056 & 96 \\
$\mu$ & 0.2 & U(0,100) & 0.2219 & 0.3885 & 0.1096 & 2.2460 & 94\\
$\psi$ (fixed) & 0.3 &  - & - & - & - & - & - \\
$r$ & 0.6 & U(0,1) & 0.5823 &  0.0880 & -0.0295 & 0.4087 & 93\\
\hline 

 \multicolumn{8}{l}{Scenario 1.2: $r$ fixed in MCMC.} \\

\hline 
$\lambda$ & 1.0 & U(0,100) & 1.0861 & 0.0921 & 0.0861 & 0.5020 & 96 \\
$\mu$ & 0.1 & U(0,100) & 0.1871 & 0.8705 & 0.8705 & 4.8774 & 98\\
$\psi$ & 0.4 & U(0,100) & 0.3841 & 0.0888 & -0.0399 & 0.4821 & 93 \\
$r$ (fixed) & 0.5 & - & - & - & - & - & - \\
\hline

\multicolumn{8}{l}{Scenario 1.3: $r = 0$ and simulations stop at $t_{or}$.} \\
\hline 
$t_{or}$  & 5.0 & U(0,1000) & 4.8545 & 0.0815 & -0.0291 & 0.4509 & 97 \\
$\lambda$ & 1.5 & U(0,100) & 1.6077 & 0.1112 & 0.0718 & 0.7094 & 93 \\
$\mu$ & 0.5 & U(0,100) & 0.6494 & 0.4088 & 0.2988 & 2.1971 & 94\\
$\psi$ & 0.2 & U(0,100) & 0.1884 & 0.1840 & -0.0579 & 0.8871 & 90\\
$\rho$ & 0.8 & U(0,1) & 0.7756 &  0.0916 & -0.0305 & 0.5446 & 97 \\
\hline
\multicolumn{8}{l}{Scenario 1.4: parameters drawn from priors.} \\
\hline 
$\lambda$ & - & U(1,1.5)  & 1.2899  & 0.0640 & -0.0106 & 0.3324 & 95 \\
$\mu$ & - & U(0.5, 1) & 0.6727 & 0.1546 & 0.0321 & 0.6520 & 92 \\
$\psi$ (fixed) & - & U(4,5) & - & - & - & - & - \\
$r$ & - & U(0,1) & 0.0499 &  0.4630 & -0.0853 & 2.0262 & 92\\
\hline 
\end{tabular}
\end{center}

\vskip2mm

\subsubsection*{Scenario 2}

In Scenarios 2.1.1 and 2.1.2, we use the model without $\rho$-sampling and stop simulations when a sample of 200 is reached. The tree model parameters are fixed.  In Scenarios 2.2 and 2.3, we use $d$, $r_t$, and $s$ parameterisation, i.e., we estimate and place priors on parameters
\begin{center}
\begin{tabular}{l}
$d = \lambda - \mu$ \\
$r_t = \frac \mu \lambda$\\
$\fosp = \frac \psi {\mu + \psi}$\\
\end{tabular}
\end{center} instead of $\lambda$, $\mu$ and $\psi$. 

In Scenario 2.2, we set $r=0$ and stop simulations when time $t_{or} = 3.5$ reached. The average number of sampled nodes is 50. We discard trees with less than 5 sampled nodes, and analyse 92 remaining trees. 

In Scenario 2.3, the tree model parameters drawn from the prior distributions. The stop simulation condition is when the time of origin reaches $3.0$. We discard trees with less than 5 or more than 250 sampled nodes, which constitutes in total 21\% of simulated trees. The average number of sampled nodes in remaining trees is 53. $s$ is fixed in MCMC. 

\begin{center}
\begin{tabular}{p{1.7cm}|p{0.9cm}<{\centering}cccp{1.3cm}<{\centering}p{1.9cm}<{\centering}p{2.2cm}<{\centering}}
& { true value} & prior\footnotemark[2]  & median & error & { relative bias} & {\footnotesize relative 95\% HPD width} & {\footnotesize 95\% HPD accuracy in \%} \\
\hline 
\multicolumn{8}{l}{Scenario 2.1.1:  $\mu_s = 0.02$ } \\
\hline 
$t_{or}$ & - & U(0,1000) & - & 0.0347 & 0.0066 & 0.2398 & 95 \\
{\footnotesize Tree height} & - & SABD & - & 0.0065 & -1e05 & 0.0404 & 97 \\  
\# SA & - & SABD & - & 0.0714 & 0.0000 & 0.3043 & 97 \\
$\lambda$ & 1.0 & U(0,100) & 1.0486  & 0.0825 &  0.0486 & 0.4231 & 93\\
$\mu$ & 0.2 & U(0, 100) & 0.2456 & 0.3436 & 0.2279 & 2.5473 & 95 \\
$\psi$ (fixed) & 0.4 & - & - & - & - & - & - \\
$r$ & 0.7 & U(0,1) & 0.6753 &  0.057 & -0.0353 & 0.3301 & 93 \\
\hline 
\multicolumn{8}{l}{Scenario 2.1.2:  $\mu_s = 0.0002$} \\
\hline 
$t_{or}$ & - & U(0,1000) & - & 0.0504 & 0.0009 & 0.2856 & 94 \\
{\footnotesize Tree height} & - & SABD & - & 0.0221 & -0.0067 & 0.1423 & 96\\  
\# SA & - & SABD & - & 0.3784 & 0.2660 & 1.9307 & 98 \\
$\lambda$ & 1.0 & U(0,100)  & 1.1206 & 0.1343 & 0.1206 & 0.7940 & 95\\
$\mu$ & 0.2 & U(0,100)  & 0.3397 & 0.6984 & 0.6984 & 4.0758 & 95 \\
$\psi$ (fixed) & 0.4 & - & -& - & - & - & - \\
$r$ & 0.7  & U(0,1) & 0.5726 & 0.1915 & -0.1820 & 1.0405 & 93 \\
\hline 
\end{tabular}
\end{center}

\footnotetext[2]{SABD stands for sampled ancestor birth-death model and ln$\mathcal N(\alpha, \beta)$ is a Log-normal distribution with mean $\alpha$ and standard deviation $\beta$ in the log-transformed space.}
\begin{center}
\begin{tabular}{p{1.7cm}|p{1cm}<{\centering}cccp{1.3cm}<{\centering}p{1.9cm}<{\centering}p{2.2cm}<{\centering}}
& true value & prior\footnotemark[2]  & median & error & { relative bias} & {\footnotesize relative 95\% HPD width} & {\footnotesize 95\% HPD accuracy in \%} \\
\hline 
\multicolumn{8}{l}{Scenario 2.2: $r=0$ and simulations stop at $t_{or}$} \\
\hline 
$t_{or}$ & 3.5 & U(0,1000) & 3.5776 & 0.0857 & 0.0222 & 0.5816 & 96 \\
{\footnotesize Tree height} & - & SABD & - & 0.0170 & 0.0000 & 0.1480 & 95 \\
\# SA & - & SABD &- & 0.0241 & 0.0000 & 0.1905 & 99 \\
$\mu_s$ & 0.01& {\footnotesize  ln$\mathcal N$(-4.6, 1.25)}  & 0.0099 & 0.0342 & -0.0076 & 0.2304& 95 \\
$d$ & 1.0 & U(0,1000)  & 1.0266 & 0.1872 & 0.0266 & 1.0317 & 95 \\
$r_t$ & 0.3333 & U(0,1) & 0.3343 & 0.2236 & 0.0029 & 1.7816 & 100 \\
$s$ & 0.4444 & U(0,1)  & 0.4343 & 0.1844 & -0.0229 & 1.2984 & 98 \\
$\rho$ & 0.7 & U(0,1) & 0.6854 & 0.1116 & -0.0209 & 0.8005 & 95 \\
\hline 
\multicolumn{8}{l}{Scenario 2.3: $\rho =0$, parameters drawn from priors and simulations stop at $t_{or}$} \\
\hline 
$t_{or}$ & 3.0 & U(0,1000) & 3.0084 & 0.05096 & 0.0157 & 0.4301 & 98\\
{\footnotesize Tree height} & - & SABD & - & 0.0168 & -1e-07 & 0.1322 & 94 \\  
\# SA & - & SABD & - & 0.0493\footnotemark[3] & 0.0000\footnotemark[3] & 0.3636\footnotemark[3]  & 98 \\
$\mu_s$ & 0.01 & {\footnotesize  ln$\mathcal N$(-4.6, 1.25)}  & 0.0100 & 0.0550 & 0.0023 & 0.2515 & 93 \\
$d$ & - & U(1,2)  & 1.5217 & 0.1053 & 0.0023 & 0.5239 & 93 \\
$r_t$ & - & U(0,1) & 0.1956 & 0.2084 & 0.0037 & 0.8309 & 95 \\
$r$ &- & U(0,1) & 0.3146 & 0.2814 & -0.0065 & 1.3304 & 92 \\
$s$ (fixed) & - & U(0.5, 1) & - & - & - & - & \\ 
\hline 
\end{tabular}
\end{center}

\footnotetext[3]{To calculate errors, relative biases and relative HPD widths for \#SA we increased true value, median estimate and lower and upper HPD estimates by one because the relative statistics are not defined if a true value is equal to zero.}

\subsection*{Simulation of the sampled ancestor skyline model}

In all three scenarios for simulation of the skyline model, we simulate the process until a sample of 200 is reached. We only simulate trees and do not simulate sequences in these scenarios. In Scenario 1.1, there are two intervals and  only sampling rate shifts from zero to non-zero value. In Scenario 1.2, there are two intervals and all parameters except $r$, which is  fixed in  the MCMC, shifts at time $t_1 = 5.0$. In Scenario 1.3, we have three intervals with the shift times: $t_1 = 3.0$ and $t_2= 6.0$. 
All parameters shift.  All elements of  vector $\bar r$ are fixed in MCMC.  In this last scenario, we simulated 50 trees and present the results on 42 successful MCMC runs (other runs did not converge with the chain length of 20M states).

\begin{center}
\begin{tabular}
{p{1.7cm}|p{0.9cm}<\centering cccp{1.3cm}p{1.9cm}<\centering p{2.2cm}<\centering}
\hline 
& {\small true value} & prior & median & error & {\small relative bias} & {\footnotesize  relative 95\% HPD width} & {\footnotesize  95\% HPD accuracy in \%} \\
\hline 
\multicolumn{8}{l}{Scenario 1.1: two intervals and only $\psi$ shifts from zero to non-zero value.} \\
\hline 
$\lambda$ & 0.8 & U(0,100) & 0.8107 & 0.0861 & 0.0134 & 0.4577 & 92 \\
$\mu$ & 0.4 & U(0,100) & 0.4199 & 0.2070 & 0.0499 & 1.1330 & 94 \\
$\psi$ (fixed) & 0.2 & - & - & - & - & - & -\\
$r$ & 0.8 & Uniform(0,1) & 0.7874 & 0.0394 & -0.0158 & 0.2516 & 97 \\
\hline 
\multicolumn{8}{l}{Scenario 1.2: two intervals and all parameters except $r$ shift.} \\
\hline 
$\lambda_1$ & 1.0  & U(0, 100) & 1.1869 & 0.2072 & 0.1869 & 1.1345 & 100 \\
$\lambda_2$ & 0.8 & U(0, 100) & 0.8442 & 0.065 & 0.0552 & 0.6402 & 100 \\
$\mu_1$ & 0.2 & U(0, 100) & 0.3660 & 0.8298 & 0.8298 & 6.0261 & 100 \\
$\mu_2$ & 0.2 & U(0, 100) & 0.2640 & 0.4035 & 0.3200 & 3.1496 & 100 \\
$\psi_1$ & 0.4 & U(0, 100) & 0.3452 & 0.2056 & -0.1371 & 0.9341 & 94 \\
$\psi_2$ & 0.5 & U(0, 100) & 0.4847 & 0.0915 & -0.0305 & 0.5592 & 96 \\
$r$ (fixed) & 0.7 & - & - & - & - & - & -\\
\hline
\multicolumn{8}{l}{Scenario 1.3: three intervals, all parameters shift and vector $\bar r$ fixed.} \\
\hline 
$\lambda_1$ & 1.5  & U(0, 100) & 1.6988 & 0.2385 & 0.1325 & 1.2336 & 95 \\
$\lambda_2$ & 1.2 & U(0, 100) & 1.3945 & 0.2014 & 0.1621 & 0.8813 & 95 \\
$\lambda_3$ & 0.5 & U(0, 100) & 0.5480 & 0.1568 & 0.0960 & 1.2625 & 100 \\
\specialrule{0.1pt}{0pt}{0pt}
$\mu_1$ & 0.5 & U(0, 100) & 0.7108 & 0.5132 & 0.4216 & 3.5732 & 100 \\
$\mu_2$ & 0.6 & U(0, 100) & 0.8086 & 0.4067 & 0.3477 & 1.9862 & 90 \\
$\mu_3$ & 0.2 & U(0, 100) & 0.2594 & 0.4856 & 0.2968 & 3.2805 & 100 \\
\specialrule{0.1pt}{0pt}{0pt}
$\psi_1$ & 0.4 & U(0, 100) & 0.4057 & 0.2359 & 0.0141 & 1.2262 & 90 \\
$\psi_2$ & 0.5 & U(0, 100) & 0.4497 & 0.1706 & -0.1006 & 0.6650 & 90 \\
$\psi_3$ & 0.1 & U(0, 100) & 0.0967 & 0.1933 & -0.0327 & 1.0046 & 98 \\
\specialrule{0.1pt}{0pt}{0pt}
$r_1$ (fixed) & 0.1 & - & - & - & - & - & - \\
$r_2$ (fixed) & 0.5 & - & - & - & - & - & -  \\
$r_3$ (fixed) & 0.9 & - & - & - & - & - & -  \\
\hline
\end{tabular}
\end{center} 

\section{HIV-1 data analysis}

For some of the taxon names in the tree in Figure~8M  the accession numbers are given in the table.

\begin{center}
\begin{tabular}{cc|cc}
\hline
taxon name & accession number & taxon name & accession number\\
\hline
129717 & AY362152.1 & 103979 & AY362145.1 \\ 
126787 & AY362151.1 & 134795 & AY362101.1 \\ 
102429 & AY362149.1 &134284 & AY362100.1 \\ 
103176 & AY362148.1 & 101559 & AY362057.1 \\ 
124923 & AY362147.1 & 102536 & AY362056.1 \\
 117505 & AY362146.1 & RO1R & AF494119.1 \\ 
\hline 
\end{tabular}
\end{center}

\end{document}